\DeclareMathOperator{\sign}{sign}
\renewcommand{\phi}{\varphi}
\newcommand{\R}{\mathbf{R}}
\theoremstyle{plain}
\newtheorem{thm}{Theorem}[section]
\newtheorem{prop}[thm]{Proposition}
\newtheorem{lemma}[thm]{Lemma}
\begin{document}



\title{Exact Analysis of Intrinsic Qualitative Features of Phosphorelays using Mathematical Models}


\author[1,2]{Michael Knudsen}
\author[1]{Elisenda Feliu}
\author[1,*]{Carsten Wiuf}

\affil[1]{Bioinformatics Research Centre, Aarhus University, DK-8000 Aarhus C, Denmark}
\affil[2]{Centre for Membrane Pumps in Cells and Disease -- PUMPKIN, Aarhus University, DK-8000 Aarhus C, Denmark}
\affil[*]{\small{To whom correspondence should be addressed. Email: wiuf@birc.au.dk}}


\maketitle

\begin{abstract}
Phosphorelays are a class of signaling mechanisms used by cells to respond to changes in their environment. Phosphorelays (of which two-component systems constitute a special case) are particularly abundant in prokaryotes and have been shown to be involved in many fundamental processes such as stress response, osmotic regulation, virulence, and chemotaxis. We develop a general model of phosphorelays extending existing models of phosphorelays and two-component systems. We analyze the model analytically under the assumption of mass-action kinetics and prove that a phosphorelay has a unique stable steady-state. Furthermore, we derive explicit functions relating stimulus to the response in any layer of a phosphorelay and show that a limited degree of ultrasensitivity (the ability to respond to changes in stimulus in a switch-like manner) in the bottom layer of a phosphorelay is an intrinsic feature which does not depend on any reaction rates or substrate amounts. On the other hand, we show how adjusting reaction rates and substrate amounts may lead to higher degrees of ultrasensitivity in intermediate layers. The explicit formulas also enable us to prove how the response changes with alterations in stimulus, kinetic parameters, and substrate amounts. Aside from providing biological insight, the formulas may also be used to avoid time-consuming simulations in numerical analyses and simulations.
\end{abstract}






\section{Introduction}\label{Seq:Introduction}

Throughout the course of evolution, living organisms have developed a variety of different cellular mechanisms capable of responding to external stimulus, and post-translational modification of proteins is common to many of these mechanisms. In particular, modification by phosphorylation is widespread, and it is estimated that about 30\% of all proteins undergo modification by phosphorylation \cite{Cohen:2000tma}.

One particular type of phosphorylation mechanism is the so-called \emph{phosphorelay} in which a phosphate group is transferred via a series of proteins through binding \cite{Appleby:1996wm,Perraud:1999wk,Stock:2000dh,West:2001tx}. Phosphorelays are particularly abundant in prokaryotes, but systems have also been identified in eukaryotes. Common to all phosphorelays are two proteins, a histidine kinase (HK) and a response regulator (RR). Upon sensing external stimulus, a histidine residue on the HK autophosphorylates using ATP, and the phosphate group is transferred to an aspartate residue on the RR, either directly or through a series of intermediate steps. When phosphorylated, an output domain of the RR is capable of adjusting the cellular response.

Four examples of phosphorelays are shown in Figure~\ref{Fig:FourRelays}. The EnvZ/OmpR system in \emph{E.~coli} is involved in osmoregulation of porin genes \cite{Stock:2000dh,Russo:1991te}. Since it comprises only two components, the HK and the RR, it is also referred to as a \emph{two-component system} (TCS). A slightly more complicated TCS example is the BvgS/BvgA system in \textit{B.~pertussis}, used by the bacterium to activate virulence genes \cite{Stock:2000dh,Uhl:1996tj,Cotter:2003cj}, where the HK contains three phosphorylation sites. Some systems have one or more intermediate phosphotransfer modules, as is e.g.~the case for the osmoregulation pathway Sln1p/Ypd1p/Ssk1p in \textit{S.~cerevisiae} \cite{Stock:2000dh,Maeda:1994ba,Posas:1996ve} and the sporulation initiating pathway Spo0A/Spo0F/Spo0B/Spo0A in \textit{B.~subtilis} \cite{Stock:2000dh,Burbulys:1991wo,Perego:1996wv,Hoch:1993ch}.

\begin{figure}[!ht]
\begin{center}
\includegraphics[width=4in]{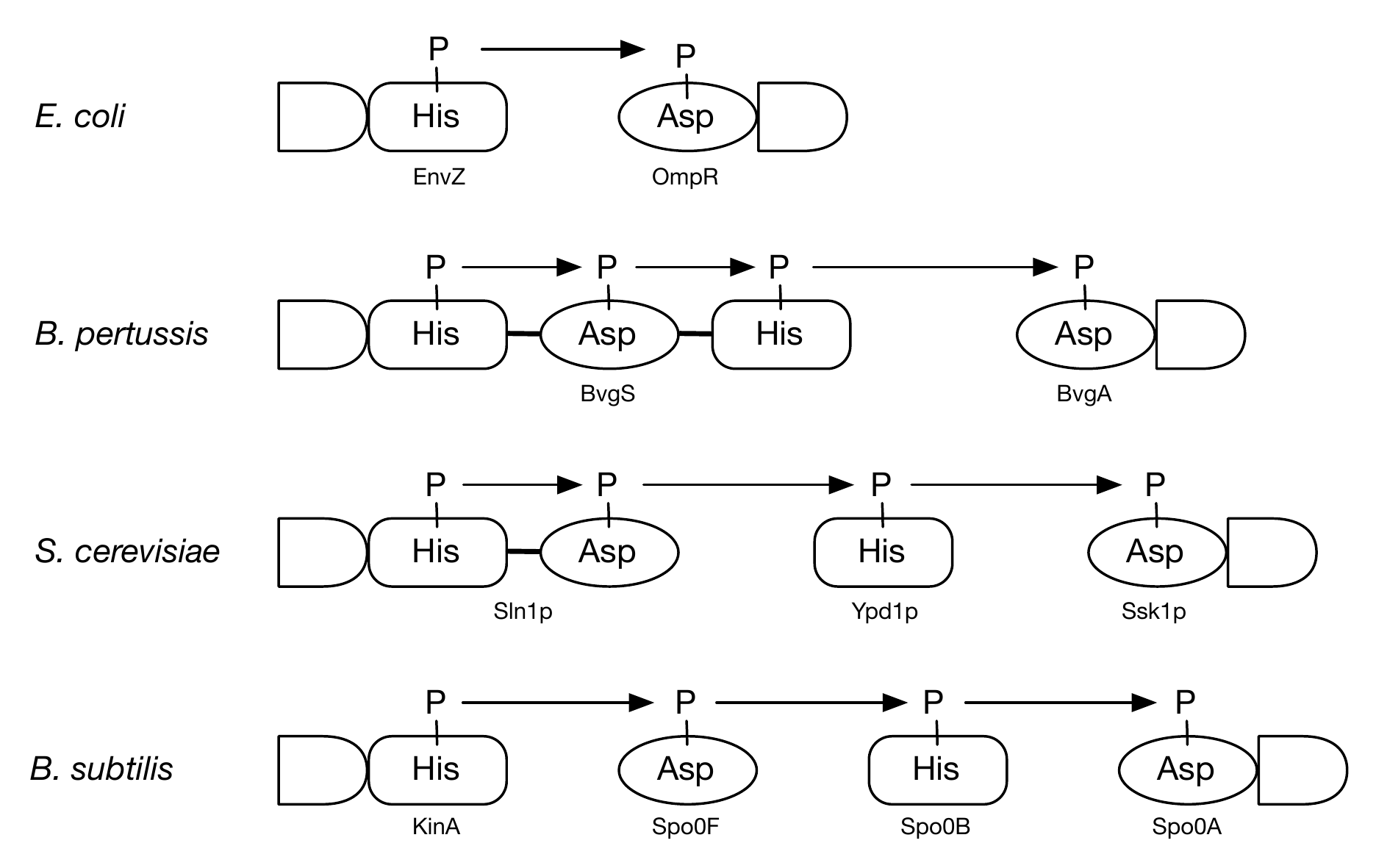}
\end{center}
\caption{
Examples of phosphorelays with different architectures. The EnvZ/OmpR and BvgS/BvgA are both examples of TCSs, but they vary in the number of phosphorylation sites on the HK. The systems in \emph{S. cerevisiae} and \emph{B. subtilis} both have a total of four phosphorylation sites, but they are distributed on three and four proteins, respectively.
}\label{Fig:FourRelays}
\end{figure}

The phosphorelays mentioned above are among the most well-described examples in the literature, but they only constitute a small fraction of the several hundreds of phosphorelays known \cite{Stock:2000dh,Chang:1998vx}, and studies of completed bacterial genomes have revealed the presence of many genes coding for HKs and RRs likely to be involved hitherto unknown phosphorelays \cite{West:2001tx,Zhang:2005kq}. For example, $62$ such genes have been identified in \emph{E. coli}, which amounts to more than $1\%$ of the entire genome \cite{West:2001tx}. Furthermore, the genes have been shown to be involved in a multitude of processes like stress response, osmotic regulation, virulence, and chemotaxis \cite{Mizuno:1997uca}, which illustrates the importance and ubiquity of phosphorelays.

Given the widespread occurrence of phosphorelays, it is only natural to ask what the benefits of such an elaborate signaling mechanism are. Among the phosphorelays known today, none have more than four phosphorylation sites in total \cite{Appleby:1996wm,Stock:2000dh}, however as illustrated in Figure~\ref{Fig:FourRelays}, the architectures may differ in the number of phosphorylation sites on each protein. One may thus speculate that whether the phosphorylation sites are located on one or more proteins influences the function of the phosphorelay, and that the benefits of a phosphorelay quickly saturate (or are balanced by drawbacks) with an increasing number of phosphorylation sites.

Mathematical modeling has been applied to study various types of biological networks, e.g~enzymatic reaction networks \cite{Gunawardena:2007gl,Gunawardena:2005jm,Manrai:2008kb,Salazar:2007bd,Salazar:2009tj,Thomson:2009dn,Kapuy:2009js,Wang:2007dc} and signaling cascades \cite{Goldbeter:1981tq,Goldbeter:1984vc,Huang:1996vm,Feliu:2011iu,Bluthgen:2006ec,Markevich:2004jo,Ventura:2008gc,Conradi:2008dw}, and has provided insight into steady-states, response to external stimulus, and robustness to changes in protein levels and kinetic parameters \cite{Gunawardena:2010dj, Shinar:2010dd, Shinar:2007gc, Li:2004eu, Batchelor:2003dy, Barkai:1997cd}. Precise measurements of concentrations and reaction rates are often difficult to obtain, and modeling can assist by determining whether e.g.~the number of steady states and the qualitative stimulus-response behavior is intrinsic to the network architecture and not dependent on the actual concentrations and reaction rates.

Here we develop a general model of phosphorelays of any size and architecture based on mass-action kinetics. The model extends existing models of phosphorelays \cite{Kim:2006hy,CsikaszNagy:2011dm}, and using an algebraic approach developed in \cite{Feliu:2011iu} we analytically analyze the model without resorting to numerical simulations. We prove the existence of a unique stable steady-state and show how it varies with changing model parameters. Furthermore, we obtain explicit expressions for stimulus-response curves. This allows us to derive an upper bound on the response coefficient in the bottom layer of any phosphorelay irrespectively of size and architecture, which is in agreement with what has been observed in both experiments and numerical models \cite{CsikaszNagy:2011dm,Fujita:2005fv}. Furthermore, we show that even for small phosphorelays (comprising only three phosphorylation sites), qualitatively very different response patterns are possible, and we derive explicit conditions on reaction rates and substrate concentrations describing each pattern. This contrasts what has previously been reported using simulation studies where saturation of phosphorylated sites at the bottom of the phosphorelay was suggested to cause a rise in response to sequentially propagate up through the phosphorelay \cite{CsikaszNagy:2011dm}.

Convergence and stability of the steady-state is proved using the theory of monotone dynamical systems  \cite{Angeli:2010ff,Angeli:2007ig}, which also provides the existence and uniqueness of the steady-state. However, our more direct algebraic approach to solving the steady-state equations is rewarded in that the calculations naturally extend to analytical results on the stimulus-response behavior. Combined with recent systematic approaches for reducing the complexity of the equations to be solved \cite{Feliu:2011uza,Thomson:2009iu}, we hope that similar direct, analytical calculations will become tractable for other chemical reaction networks too. 

\section{The Model}\label{Sec:TheModel}

We consider a general phosphorelay system consisting of $M\ge 2$ substrates $S^1,S^2,\ldots,S^M$, where the $m$th substrate $S^m$ has $N_m\ge 1$ phosphorylation sites (see Fig.~\ref{Fig:RelayCartoon}). We assume that substrates are never phosphorylated at more than one site at a time and denote by $S^m_n$ the $m$th substrate phosphorylated at its $n$th site with $n=0$ corresponding to the unphosphorylated state. We refer to the set of all phosphoforms $S^m_n$ with $0\le n\le N_m$ as the $m$th \emph{layer} of the phosphorelay and to $N_m$ as the \emph{length} of the $m$th layer.

We assume mass-action kinetics and that phosphate transfer within a substrate happens sequentially
\begin{align}\label{Reactions:IntraSubstrate}
\xymatrix{
S^m_1 \ar@<0.5ex>[r]^(0.50){a_{m,1}}&
S^m_2 \ar@<0.5ex>[r]^(0.50){a_{m,2}}\ar@<0.5ex>[l]^(0.50){b_{m,2}}&
\cdots \ar@<0.5ex>[r]^(0.45){a_{m,N_m-1}}\ar@<0.5ex>[l]^(0.50){b_{m,3}}&
S^m_{N_m} \ar@<0.5ex>[l]^(0.55){b_{m,N_m}}
}\qquad\text{for}\qquad 1\le m\le M,
\end{align}
with positive reaction constants $a_{m,n}$ and $b_{m,n}$, and we refer to these as \emph{intralayer} reaction rates. The transfer of phosphate groups between substrates in two different layers is modeled via the formation of intermediate complexes
\begin{align}\label{Reactions:InterSubstrate}
\xymatrix{
S^m_{N_m} + S^{m+1}_{0} \ar@<0.5ex>[r]^(0.65){u_m}&
X^m \ar@<0.0ex>[r]^(0.35){w_m}\ar@<0.5ex>[l]^(0.35){v_m}&
S^m_{0} + S^{m+1}_{1} 
}\qquad\text{for}\qquad 1\le m< M,
\end{align}
with positive reaction constants $u_m$, $v_m$, and $w_m$. That is, only when $S^m$ is phosphorylated at its last site can it transmit phosphate to the next layer.

Finally, we assume constant rates of phosphorylation (resp.~dephosphorylation) of $S^1_0$ and $S^M_{N_M}$, respectively, represented by two reactions
\begin{align}\label{Reactions:Special}
\xymatrix{
S^1_{0} \ar@<0.0ex>[r]^(0.5){c}& S^1_{1}
}\qquad\text{and}\qquad
\xymatrix{
S^M_{N_M} \ar@<0.0ex>[r]^(0.5){d}& S^M_{0}.
}
\end{align}

\begin{figure}[!ht]
\begin{center}
\includegraphics[width=4in]{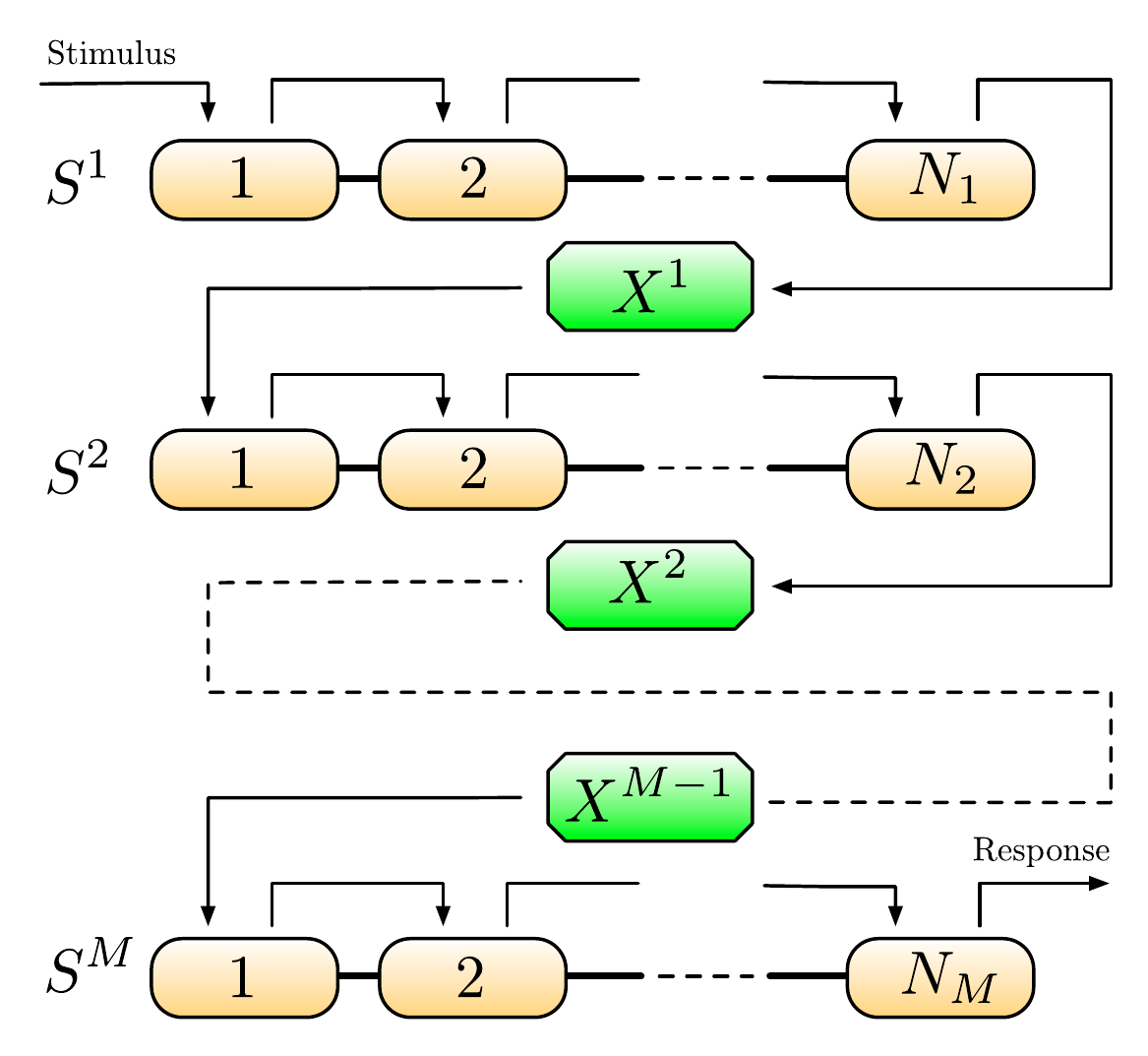}
\end{center}
\caption{
Schematic illustration of a general phosphorelay comprising $M$ layers. The number of phosphorylation sites in the $m$th layer is $N_m$, and the transfer of phosphate from one layer to the next is mediated via an intermediate complex.
}\label{Fig:RelayCartoon}
\end{figure}

The system is essentially linear with each new layer introducing a  new substrate. However, the mechanism of phosphorylation in the top layer is different from the phosphotransfer mechanism between the subsequent layers. Similarly, dephosphorylation of the bottom layer is different from the dephosphorylation mechanism in the other layers, which is also phosphotransfer. 

The \emph{stimulus} activating the relay is implicitly captured in the reaction constant $c$. Increasing $c$ corresponds to increasing the stimulus. When $c$ is very low, most of the substrate $S^1$ will remain unphosphorylated, whereas higher values of $c$ will push the substrate towards the phosphorylated phosphoforms. The \emph{final response} $S^M_{N_M}$ transmits its phosphate group to a receptor molecule, and this is modeled as a loss of the phosphate group without details about other molecules potentially involved in the process.

To avoid cumbersome notation, we denote by $S$ both the species $S$ and its concentration. It should always be clear from the context what is meant. Under the assumption of mass-action kinetics, the reactions \eqref{Reactions:IntraSubstrate}, \eqref{Reactions:InterSubstrate}, and \eqref{Reactions:Special} give rise to a set of differential equations,
\begin{align*}
\dot{S}^1_0 &= - c S^1_0 + w_1 X^1\\
\dot{S}^m_0 &= - u_{m-1} S^{m-1}_{N_{m-1}} S^m_0 + v_{m-1} X^{m-1} + w_m X^m&1< m< M\\
\dot{S}^M_0 &= - u_{M-1} S^{M-1}_{N_{M-1}} S^M_0 + v_{M-1} X^{M-1} + d S^M_{N_M}\\
\dot{S}^1_1 &= c S^1_0 - a_{1,1} S^1_1 + b_{1,2} S^1_2\\
\dot{S}^m_1 &= w_{m-1} X^{m-1} - a_{m,1} S^m_1 + b_{m,2} S^m_2&1< m\le M\\
\dot{S}^m_n &= a_{m,n-1} S^m_{n-1} + b_{m,n+1} S^m_{n+1} - (b_{m,n} + a_{m,n}) S^m_n&1\le m\le M&,\quad 1<n<N_m\\
\dot{S}^m_{N_m} &= - u_m S^m_{N_m} S^{m+1}_0 + v_m X^m + a_{m,N_m-1} S^m_{N_m-1} - b_{m,N_m} S^m_{N_m}&1\le m< M\\
\dot{S}^M_{N_M} &= a_{M,N_M-1} S^M_{N_M-1} - ( b_{M,N_M} + d ) S^M_{N_M}\\
\dot{X}^m &= u_m S^m_{N_m} S^{m+1}_0 - (v_m + w_m) X^m&1\le m< M
\end{align*}
By direct inspection it follows that
\begin{align}\label{Eq:PreConservationLaws}
\dot{S}^m_0+\dot{S}^m_1+\cdots+\dot{S}^m_{N_m}+\dot{X}^{m-1}+\dot{X}^m=0,
\end{align}
for all $1\le m\le M$, where we have defined $X^0=X^M=0$ in order to simplify notation, and hence the sum
\begin{align}\label{Eq:ConservationLaws}
S^m_{tot}&=S^m_0+S^m_1+\cdots+S^m_{N_m}+X^{m-1}+X^m,
\end{align}
is conserved for all $1\le m\le M$. This reflects the fact that $S^m$ either exists in one of its $N_m+1$ phosphoforms or is bound in one of the intermediate complexes $X^{m-1}$ or $X^m$. We will refer to $S^m_{tot}$ as the \emph{total amount} of the substrate $S^m$ and to \eqref{Eq:ConservationLaws} as the \emph{conservation law} for $S^m$.

In the following section we prove that for fixed reaction constants and total amounts of substrate, the phosphorelay has a unique steady-state, and we use the insight obtained in the proof to investigate the stimulus-response behavior of the system.

\section{Results}\label{Sec:Results}

\subsection{Steady-state equations}\label{SubSec:SteadyStateEquations}

The steady-state equations are the differential equations equated to zero along with the conservation laws for positive total amounts $S^m_{tot}$, and the steady-states are found by solving these for the variables (substrate phosphoforms and intermediate complexes). Hence there is a steady-state equation corresponding to each species as well as $M$ additional conservation laws. Since there are $M-1$ intermediate complexes, and each substrate $S^m$ exists in $N_m+1$ different phosphoforms, it follows that the system consists of $3M-1+\sum_{m=1}^MN_m$ equations in $2M-1+\sum_{m=1}^MN_m$ variables.

To obtain a simpler system of equations that more clearly elucidates the constraints imposed by the phosphorelay structure, we manipulate the steady-state equations to obtain a simpler, but equivalent, set of equations.

First note that according to \eqref{Eq:PreConservationLaws}, the equations $\dot{S}^m_1=0$ for $1\le m\le M$ hold if $\dot{S}^m_n=0$ and $\dot{X}^m=0$ hold for all $m$ and $n\neq 1$, and we may therefore leave them out. For all $1\le m<M$, the steady-state equation $\dot{X}^m=0$ is equivalent to the equation $\dot{X}^m+\dot{S}^{m+1}_0=0$, which in turn is equivalent to
\begin{align}\label{NewEq:SolutionXm}
X^m=\frac{d}{w_m}S^M_{N_M}\qquad\text{for}\qquad 1\le m<M.
\end{align}
Furthermore, by replacing all $\dot{S}^m_{N_m}=0$ by the equivalent $\dot{S}^m_{N_m}-\dot{S}^{m+1}_0=0$, it follows after inserting \eqref{NewEq:SolutionXm} that this is equivalent to
\begin{align}\label{Eq:SubResult1}
S^m_{N_m-1}=\frac{b_{m,N_m}S^m_{N_m}+d S^M_{N_M}}{a_{m,N_m-1}}\qquad \text{for}\qquad 1\le m\le M.
\end{align}
For $1\le n<N_m-1$, the steady-state equation $\dot{S}^m_{n+1}=0$ is equivalent to
\begin{align}\label{Eq:SubResult2}
S^m_n=\frac{(a_{m,n+1}+b_{m,n+1})S^m_{n+1}-b_{m,n+2}S^m_{n+2}}{a_{m,n}}\qquad \text{for}\qquad 1\le m\le M,
\end{align}
and using induction, we may combine \eqref{Eq:SubResult1} and \eqref{Eq:SubResult2} in one equivalent statement,
\begin{align}\label{NewEq:SolutionSmnABC}
S^m_{n}=\frac{B_{m,n}S^m_{N_m}+d\,C_{m,n}S^M_{N_M}}{A_{m,n}}\quad\text{for}\quad\begin{array}{l}1\le m\le M\\1\le n\le N_m,\end{array}
\end{align}
where the constants are defined by
\begin{align}\label{Eq:HandyConstants}
A_{m,n}=\prod_{i=n}^{N_m-1} a_{m,i},\:\:\:
B_{m,n}=\prod_{i=n+1}^{N_m} b_{m,i},\:\:\:
C_{m,n}=\sum_{i=n+1}^{N_m}\Big(A_{m,i}\prod_{j=n+1}^{i-1}b_{m,j}\Big)
\end{align}
for $1\le m\le M$ and $0\le n\le N_m$. In particular, these definitions imply that $A_{m,N_m}=B_{m,N_m}=1$, $C_{m,N_m}=0$, and $C_{m,N_m-1}=1.$ Apart from $C_{m,N_m}$, the constants are all positive and depend only on the intralayer reaction constants $a_{m,n}$ and $b_{m,n}$ in the $m$th layer.

Using \eqref{NewEq:SolutionXm} and \eqref{NewEq:SolutionSmnABC}, we see that the conservation law \eqref{Eq:ConservationLaws} is fulfilled if and only if
\begin{align}\label{NewEq:LambdaAndMu}
S^m_0=S^m_{tot}-\lambda_m S^M_{N_M}-\mu_m S^m_{N_m}\qquad\text{for}\qquad 1\le m\le M,
\end{align}
with constants given by
\begin{align}\label{Eq:LambdaMuConstants}
\lambda_m=d\Big( \frac{1}{w_{m-1}} + \frac{1}{w_m} + \sum_{n=1}^{N_m}\frac{C_{m,n}}{A_{m,n}} \Big)
\qquad\text{and}\qquad \mu_m=\sum_{n=1}^{N_m}\frac{B_{m,n}}{A_{m,n}}
\end{align}
(terms involving the undefined rates $w_0$ and $w_M$ are removed). The constants are all positive and depend only on reaction rates in the $m$th and $(m-1)$th layer. Finally, using \eqref{NewEq:SolutionXm} it follows that $\dot{S}^m_0=0$ is equivalent to
\begin{align*}
cS^1_0=dS^M_{N_M}\quad\text{and}\quad d\big(\frac{v_{m-1}}{w_{m-1}}+1\big)S^M_{N_M}=u_{m-1} S^{m-1}_{N_{m-1}} S^m_0\quad\text{for}\quad 1< m\le M.
\end{align*}
Summing up, the set of steady-state equations are replaced by an equivalent set of equations,
\[\begin{array}{rll}
(\textbf{SS1})&cS^1_0=d S^M_{N_M}\\[4pt]
(\textbf{SS2})&X^m=\frac{d}{w_m}S^M_{N_M}&1\le m<M\\[4pt]
(\textbf{SS3})&S^m_{n}=\frac{B_{m,n}S^m_{N_m}+d\,C_{m,n}S^M_{N_M}}{A_{m,n}}&1\le m\le M,\: 1\le n\le N_m\\[4pt]
(\textbf{SS4})&S^m_0=S^m_{tot}-\lambda_m S^M_{N_M}-\mu_m S^m_{N_m}&1\le m\le M\\[4pt]
(\textbf{SS5})&d\big(\frac{v_{m-1}}{w_{m-1}}+1\big)S^M_{N_M}=u_{m-1} S^{m-1}_{N_{m-1}} S^m_0&1< m\le M,
\end{array}\]
with constants defined in \eqref{Eq:HandyConstants} and \eqref{Eq:LambdaMuConstants}. Note that the reaction rate $c$ only appears in (\textbf{SS1}).

Throughout this paper we assume that all reaction constants and total amounts are fixed and  positive unless otherwise clearly stated. Any solution to the steady-state equations is  a steady-state, and the system could therefore possess multiple steady-states, some of which with negative concentrations. These are not biologically obtainable, so the focus is on steady-states in which all concentrations are non-negative (zero or positive). We call these \emph{biologically meaningful steady-states} (BMSSs).

\subsection{Existence of a unique stable BMSS}\label{Sec:ExistenceAndUniquenessOfBMSS}

In this section we prove the existence of a unique stable BMSS for a general phosphorelay. We do so by writing all steady-state concentrations as rational functions of the final response $S^M_{N_M}$ (recall that a \emph{rational function} in $x$ is a quotient $f(x)/g(x)$ of two polynomial functions in $x$) and then show that precisely one value of $S^M_{N_M}$ gives rise to a BMSS.

Starting with the $M$th layer, we work our way to the top layer by layer. The link between layers is obtained by relating the steady-state value of $S^m_{N_m}$ with that of $S^M_{N_M}$ through a rational function $S^m_{N_m}=\psi_m(S^M_{N_M})$. The singularities of $\psi_m$ for $1\le m\le M$ provide a necessary condition $S^M_{N_M}<\xi_m$ for non-negative concentrations in the layers $m,m+1,\ldots,M$, and we prove that $\xi_1<\xi_2<\cdots<\xi_{M-1}$, from which it follows that $S^M_{N_M}<\xi_1$ is necessary for all concentrations to be positive. We then write $c=\psi_0(S^M_{N_M})$ as an increasing rational function of $S^M_{N_M}$ and demonstrate how this leads to a stronger necessary condition $S^M_{N_M}<\xi_0$. Finally, we show that for any given value of $c$, the equation $c=\psi_0(S^M_{N_M})$ has a unique solution $S^M_{N_M}$ in $[0,\xi_0)$, which establishes the existence and uniqueness of a BMSS. In fact, it turns out that all steady-state concentrations are strictly positive.

Note that (\textbf{SS2})--(\textbf{SS4}) express $X^m$ and $S^m_n$ for $0\le n\le N_m$ as rational functions of $S^M_{N_M}$ and $S^m_{N_m}$ with coefficients depending on the intralayer reaction constants in the $m$th layer and the reaction constants $d$, $w_m$, and $w_{m-1}$ only. We now show how (\textbf{SS5}) yields the link to express all $S^m_{N_m}$ as rational functions of $S^M_{N_M}$.

We first show that at steady-state $S^m_0\neq 0$ for all $1\le m\le M$. If this is not the case, there is a largest $m$ for which $S^m_0=0$, and (\textbf{SS5}) then implies that $S^M_{N_M}=0$. For $m=M$, (\textbf{SS4}) implies that $S^M_{tot}=0$, which contradicts the assumption of positive total amounts. For $m<M$ we argue as follows: Since $m$ is the largest with the property $S^m_0=0$, we have $S^{m+1}_0\neq 0$, and combined with $S^M_{N_M}=0$, it follows from (\textbf{SS5}) that $S^m_{N_m}=0$. Now using (\textbf{SS4}) yields $S^m_{tot}=0$, which again contradicts the assumption of positive total amounts.

Since $S^m_0$ is non-zero at steady-state, we may isolate $S^{m-1}_{N_{m-1}}$ in (\textbf{SS5}) and use (\textbf{SS4}) to get
\begin{align}\label{NewEq:FormulaForSm-1Nm-1}
S^{m-1}_{N_{m-1}}=\frac{d\big(\frac{v_{m-1}}{w_{m-1}}+1\big)S^M_{N_M}}{u_{m-1}(S^m_{tot}-\lambda_m S^M_{N_M}-\mu_m S^m_{N_m})}
\qquad\text{for}\qquad 1< m\le M,
\end{align}
which shows that if we define $\psi_m$ recursively by $\psi_M=\operatorname{id}$, and
\begin{align}\label{NewEq:DefinitionOfPsi} 
\psi_{m-1}(y)=\frac{d\big(\frac{v_{m-1}}{w_{m-1}}+1\big)y}{u_{m-1}\big(S^m_{tot}-\lambda_m y-\mu_m \psi_m(y)\big)}
\qquad\text{for}\qquad 1<m\le M,
\end{align}
then $\psi_m(S^M_{N_M})=S^m_{N_m}$ at steady-state. The recursive definition implies that $\psi_m$ is a rational function. Furthermore, by isolating $c$ in (\textbf{SS1}) and inserting $S^1_0$ from (\textbf{SS4}), it follows using $S^1_{N_1}=\psi_1(S^M_{N_M})$ that $c=\psi_0(S^M_{N_M})$, where
\begin{align}\label{Eq:ThePsi0Function}
\psi_0(y)=\frac{d y}{S^1_{tot}-\lambda_1 y-\mu_1\psi_1(y)}
\end{align}
is also a rational function.

Writing $\psi_M(y)=p_M(y)/q_M(y)$ with $p_M(y)=y$ and $q_M(y)=1$, we may use \eqref{NewEq:DefinitionOfPsi} and \eqref{Eq:ThePsi0Function} to recursively write all $\psi_m(y)$ as quotients $p_m(y)/q_m(y)$ with $p_m(0)=0$, where both $p_m$ and $q_m$ are polynomials of degree $M-m$ for all $0\le m<M$.

\begin{prop}\label{NewProp:PsimSS}
The steady-state equations (\textbf{SS1})--(\textbf{SS5}) are satisfied, if and only if (\textbf{SS2})--(\textbf{SS4}) are satisfied along with $\psi_0(S^M_{N_M})=c$, and $\psi_m(S^M_{N_M})=S^m_{N_m}$ for all $1\le m\le M$.

Furthermore, for all $m<M$ the function $\psi_m$  has a minimal positive singularity $\xi_m$ satisfying
\begin{align*}
\xi_{M-1}>\xi_{M-2}>\cdots>\xi_1>\xi_0>0.
\end{align*}
Let $\xi_M=\xi_{M+1}=\infty$. Then $\psi_m$ is continuous, non-negative and strictly increasing on $[0,\xi_m)$, negative on $(\xi_m,\xi_{m+1})$, and it satisfies $\psi_m(0)=0$ and $\psi_m(y)\rightarrow\infty$ for $y\rightarrow\xi_m^-$ for all $0\le m\le M$.
\end{prop}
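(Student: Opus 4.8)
The plan is to dispatch the equivalence part directly from the reduction already carried out, and then to prove every analytic claim about the functions $\psi_m$ by a single downward induction on $m$, from $\psi_M=\operatorname{id}$ down to $\psi_0$. For the equivalence, the only point needing comment is that the passage from (\textbf{SS5}) to the recursion is reversible: since it was shown just above that $S^m_0\ne 0$ at any steady-state, formula \eqref{NewEq:FormulaForSm-1Nm-1} may be read in either direction, so (\textbf{SS5}) together with (\textbf{SS4}) is equivalent to the relations $\psi_m(S^M_{N_M})=S^m_{N_m}$ coming from \eqref{NewEq:DefinitionOfPsi}, while (\textbf{SS1}) is equivalent to $\psi_0(S^M_{N_M})=c$ through \eqref{Eq:ThePsi0Function}. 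Equations (\textbf{SS2})--(\textbf{SS4}) are simply retained.

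For the analytic statement I would introduce the rescaled denominator
\[
D_m(y)=S^m_{tot}-\lambda_m y-\mu_m\psi_m(y),
\]
so that $\psi_{m-1}(y)$ is a positive constant times $y/D_m(y)$ for $1<m\le M$, while $\psi_0(y)=d\,y/D_1(y)$ has the same shape. The base case $m=M$ is immediate: $\psi_M(y)=y$ is continuous, non-negative, strictly increasing on $[0,\infty)=[0,\xi_M)$ with $\psi_M(0)=0$ and $\psi_M(y)\to\infty$, and the clause on $(\xi_M,\xi_{M+1})$ is vacuous. For the inductive step I assume the stated properties for $\psi_m$. Then $D_m$ is continuous and strictly decreasing on $[0,\xi_m)$ because $\lambda_m,\mu_m>0$ and $\psi_m$ is increasing; moreover $D_m(0)=S^m_{tot}>0$ while $D_m(y)\to-\infty$ as $y\to\xi_m^-$, since $\psi_m(y)\to\infty$ there. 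The intermediate value theorem therefore produces a unique zero $\xi_{m-1}\in(0,\xi_m)$ of $D_m$. This single fact yields both the ordering $0<\xi_{m-1}<\xi_m$ and the identification of $\xi_{m-1}$ as the minimal positive singularity of $\psi_{m-1}$, since $\psi_{m-1}$ is singular exactly where $D_m$ vanishes and $D_m>0$ on $[0,\xi_{m-1})$.

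The remaining properties of $\psi_{m-1}$ follow from the sign and monotonicity of $D_m$. On $[0,\xi_{m-1})$ the numerator of $\psi_{m-1}$ is non-negative and strictly increasing while $D_m$ is positive and non-increasing, so $\psi_{m-1}$ is non-negative, vanishes only at $y=0$, and is strictly increasing; as $y\to\xi_{m-1}^-$ the denominator tends to $0^+$ with positive numerator, giving $\psi_{m-1}(y)\to\infty$. On $(\xi_{m-1},\xi_m)$ one has $D_m<0$ with no further zeros, so $\psi_{m-1}$ is finite and negative there. Running the same argument for $\psi_0$ (which uses $D_1$) closes the induction.

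The step I expect to require the most care is strict monotonicity on $[0,\xi_{m-1})$. Rather than differentiating $\psi_{m-1}$ and fighting with the sign of $D_m(y)-yD_m'(y)$, which would force me to control the quantity $y\psi_m'(y)-\psi_m(y)$, I would argue by comparison: for $0\le y_1<y_2<\xi_{m-1}$ the numerator strictly increases and the positive denominator $D_m$ does not increase, so the quotient strictly increases, the case $y_1=0$ being handled separately since $\psi_{m-1}(0)=0$. The only genuinely non-routine input is that the sign change of $D_m$ lands strictly inside $(0,\xi_m)$, and this is exactly what the inductive hypotheses $\psi_m(0)=0$ and $\psi_m(y)\to\infty$ guarantee.
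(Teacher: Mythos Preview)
Your proposal is correct and follows essentially the same approach as the paper: both dispatch the equivalence as an immediate rearrangement of (\textbf{SS1}) and (\textbf{SS5}), then prove the analytic claims by downward induction on $m$, using that the denominator $S^m_{tot}-\lambda_m y-\mu_m\psi_m(y)$ is continuous, strictly decreasing, positive at $0$, and tends to $-\infty$ as $y\to\xi_m^-$, which yields the unique zero $\xi_{m-1}\in(0,\xi_m)$ and all the stated properties of $\psi_{m-1}$. Your introduction of the shorthand $D_m$ and your explicit comparison argument for strict monotonicity are minor elaborations, not a different route.
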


\begin{proof}
The first part of the proposition follows immediately since the equations $\psi_0(S^M_{N_M})=c$ and $\psi_m(S^M_{N_M})=S^m_{N_m}$ are just rearrangements of (\textbf{SS1}) and (\textbf{SS5}), respectively.

For the second part, the case $m=M$ is trivial since $\psi_M=\operatorname{id}$, and $(\xi_M,\xi_{M+1})$ is the empty set. Assume now that the claim is true for $m+1$ and consider the case $m$. By induction, $\psi_{m+1}$ is increasing on $[0,\xi_{m+1})$, so the denominator of $\psi_m$ is continuous and decreasing on $[0,\xi_{m+1})$, and it diverges towards $-\infty$ for $y\rightarrow\xi_{m+1}^-$. Therefore it has a unique zero $\xi_m<\xi_{m+1}$. Furthermore, the numerator of $\psi_m$ is continuous and increasing and equals $0$ for $y=0$, and therefore the entire fraction $\psi_{m}(y)$ is continuous, positive, and increasing on $[0,\xi_{m})$, negative on $(\xi_m,\xi_{m+1})$, $\psi_{m}(0)=0$, and $\psi_{m}(y)\rightarrow\infty$ for $y\rightarrow\xi_{m}^-$.
\end{proof}

\begin{thm}\label{Thm:SteadyStates}
For any set of fixed positive reaction constants and total amounts, the phosphorelay converges to a unique stable BMSS. In fact, the steady-state concentrations of all substrates and intermediate complexes are positive.
\end{thm}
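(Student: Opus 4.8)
My plan is to split the statement into two parts. First I would use the algebraic description of Proposition~\ref{NewProp:PsimSS}, which already determines every steady-state, to prove existence, uniqueness and strict positivity of the BMSS; then I would invoke monotone dynamical systems theory for the dynamical claim of global convergence and stability. For existence and uniqueness everything reduces to the scalar equation $\psi_0(S^M_{N_M})=c$. I would first observe that any BMSS forces the response $S^M_{N_M}$ into the window $[0,\xi_0)$: by (\textbf{SS4}) the concentration $S^1_0$ is exactly the denominator of $\psi_0$, which by the construction of $\xi_0$ in Proposition~\ref{NewProp:PsimSS} is strictly positive precisely on $[0,\xi_0)$, and the remaining positivity requirements are no stronger since $\xi_0$ is the smallest of the thresholds $\xi_0<\xi_1<\cdots<\xi_{M-1}$. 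On $[0,\xi_0)$ the proposition shows $\psi_0$ is a continuous, strictly increasing bijection onto $[0,\infty)$ with $\psi_0(0)=0$, so for the fixed stimulus $c>0$ the equation $\psi_0(S^M_{N_M})=c$ has exactly one solution $y^\ast\in(0,\xi_0)$, giving uniqueness.

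For positivity I would read off the remaining concentrations at $y^\ast$. Each $S^m_{N_m}=\psi_m(y^\ast)$ is positive because $y^\ast<\xi_0<\xi_m$ lies where $\psi_m$ increases strictly from $\psi_m(0)=0$; the complexes obey $X^m=\tfrac{d}{w_m}y^\ast>0$ by (\textbf{SS2}); each $S^m_n$ with $1\le n\le N_m$ is positive by (\textbf{SS3}) since $A_{m,n}>0$ and the term $B_{m,n}S^m_{N_m}>0$; and $S^1_0=dy^\ast/c>0$ by (\textbf{SS1}). The one genuinely delicate case is $S^m_0$ for $2\le m\le M$: here I would note that $u_{m-1}S^m_0$ is precisely the denominator of $\psi_{m-1}$ appearing in the recursion \eqref{NewEq:DefinitionOfPsi}, which is strictly positive on $[0,\xi_{m-1})$ and hence at $y^\ast<\xi_0<\xi_{m-1}$. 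Thus all concentrations are strictly positive, and $y^\ast$ indeed determines a (unique) BMSS.

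For the dynamical assertion I would pass to monotone systems theory. The conservation laws \eqref{Eq:ConservationLaws} confine each trajectory to a stoichiometric compatibility class whose intersection with the non-negative orthant is a compact, convex, forward-invariant polytope; in particular trajectories are bounded and $\omega$-limit sets are non-empty. The structural claim to establish is that the vector field is strongly monotone with respect to an orthant order: there is a sign assignment $\sigma_i\in\{+1,-1\}$ on the species for which the off-diagonal Jacobian entries $\sigma_i\sigma_j\,\partial\dot S_i/\partial S_j$ are non-negative throughout the interior of the state space (a cooperative, Metzler pattern), together with irreducibility. Granting this, the uniqueness of the interior equilibrium from the first part together with the convergence theory of \cite{Angeli:2010ff,Angeli:2007ig} forces every interior trajectory to converge to that equilibrium, and its asymptotic stability then follows either from the monotone structure or by a direct linearization.

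The main obstacle is verifying the monotone sign structure. The intralayer chains \eqref{Reactions:IntraSubstrate} and the special reactions \eqref{Reactions:Special} are harmless, but the interlayer reactions \eqref{Reactions:InterSubstrate} contribute the bimolecular terms $u_mS^m_{N_m}S^{m+1}_0$, so the Jacobian is state-dependent and its off-diagonal signs must be controlled uniformly on the interior. The crux is to produce a single consistent sign assignment propagating through all $M$ layers---alternating as phosphate is passed down and as complexes form and dissociate---so that no reaction produces a sign conflict, equivalently that the associated species graph contains no frustrated odd cycle. A secondary technical point is the boundary behaviour: I would need a short persistence argument showing that interior trajectories remain in the interior and that no boundary steady-state can attract, so that the strong-monotonicity convergence theorem genuinely applies.
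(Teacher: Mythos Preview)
Your algebraic argument for existence, uniqueness, and strict positivity is correct and matches the paper's proof essentially step for step: both reduce to the scalar equation $\psi_0(S^M_{N_M})=c$, use that $\xi_0$ is the smallest threshold so that $S^M_{N_M}\in(0,\xi_0)$ is both necessary and sufficient for all concentrations to be positive, and read off positivity of the remaining variables from (\textbf{SS1})--(\textbf{SS4}).

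The dynamical part, however, has a genuine gap. Your plan is to exhibit a sign vector $\sigma$ on the \emph{species} so that $\sigma_i\sigma_j\,\partial\dot S_i/\partial S_j\ge 0$ off the diagonal (orthant cooperativity in species coordinates). This fails for the phosphorelay. At every interlayer step the bimolecular reaction $S^m_{N_m}+S^{m+1}_0\leftrightarrow X^m$ creates a triangle in the species interaction graph with edges
\[
S^m_{N_m}\!-\!X^m:\ +,\qquad X^m\!-\!S^{m+1}_0:\ +,\qquad S^m_{N_m}\!-\!S^{m+1}_0:\ -,
\]
the last sign coming from $\partial\dot S^m_{N_m}/\partial S^{m+1}_0=-u_mS^m_{N_m}<0$ (and symmetrically). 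This is exactly the frustrated odd cycle you flag as ``the main obstacle'': it is not merely an obstacle to be overcome but an obstruction that rules out any species-level orthant cone, so no choice of $\sigma$ will work.

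The paper avoids this by using Angeli's \emph{reaction-coordinates} framework rather than checking the species Jacobian. There the relevant structural conditions are on the stoichiometric matrix $\Gamma$ and the directed SR-graph: (ii) every species occurs in exactly two reactions with opposite stoichiometric signs, so $-\Gamma_{s,r_1}\Gamma_{s,r_2}\ge 0$; (iii) the SR-graph is strongly connected on reaction nodes; and (iv) $\ker\Gamma$ contains a positive vector. Persistence (condition (i)) is then handled via siphons: one shows that the minimal siphons are precisely the supports $\Sigma_m=\{S^m_0,\dots,S^m_{N_m},X^{m-1},X^m\}$ of the conservation laws \eqref{Eq:ConservationLaws}, so each minimal siphon carries a non-negative conservation law and Theorem~2 of \cite{Angeli:2007ig} applies. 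You should replace the species-level Kamke check with this reaction-coordinate verification; the rest of your outline (boundedness from the conservation laws, uniqueness feeding into the convergence theorem) then goes through.
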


\begin{proof}
It follows from Proposition \ref{NewProp:PsimSS} that $S^m_{N_m}\ge 0$ at steady-state, if and only if $S^M_{N_M}$ is in $[0,\xi_m)$, and since $\xi_1<\xi_2<\cdots<\xi_{M-1}$, it follows that $S^M_{N_M}<\xi_1$ is a necessary condition for a BMSS. According to (\textbf{SS2}), we have $X^m\ge 0$ for $1\le m\le M$ for any $S^M_{N_M}\ge 0$, and by inserting $S^m_{N_m}\ge 0$ into (\textbf{SS3}), it shows that also $S^m_n\ge 0$ for all $1\le m\le M$ and $1\le n<N_m$. Finally, because $\xi_m$ by definition is the smallest positive root of the right-hand side of (\textbf{SS4}) after substituting $S^m_{N_m}=\psi_m(S^M_{N_M})$, it follows that $S^m_0>0$ for all $1\le m\le M$. The argument also implies that all steady-state concentrations are positive if and only if $S^M_{N_M}>0$, and since $\psi_0(S^M_{N_M})=c>0$, this is always the case.

According to Proposition \ref{NewProp:PsimSS}, the function $\psi_0$ is continuous and increases from $0$ to $\infty$ on $[0,\xi_0)$ and is negative on $(\xi_0,\xi_1)$. It follows (see also Fig.~\ref{Fig:Psi0Function}) that precisely one value of $S^M_{N_M}$ in $[0,\xi_0)$ satisfies the condition $c=\psi_0(S^M_{N_M})$. This establishes the existence and uniqueness of a BMSS.

The convergence and stability part can be proved using methods from the theory of monotone dynamical systems (see Theorem 2 in \cite{Angeli:2010ff}), and a proof is included in \ref{Appendix:Stability}.
\end{proof}

\begin{figure}[htbp]
\begin{center}
\includegraphics[width=3.5in]{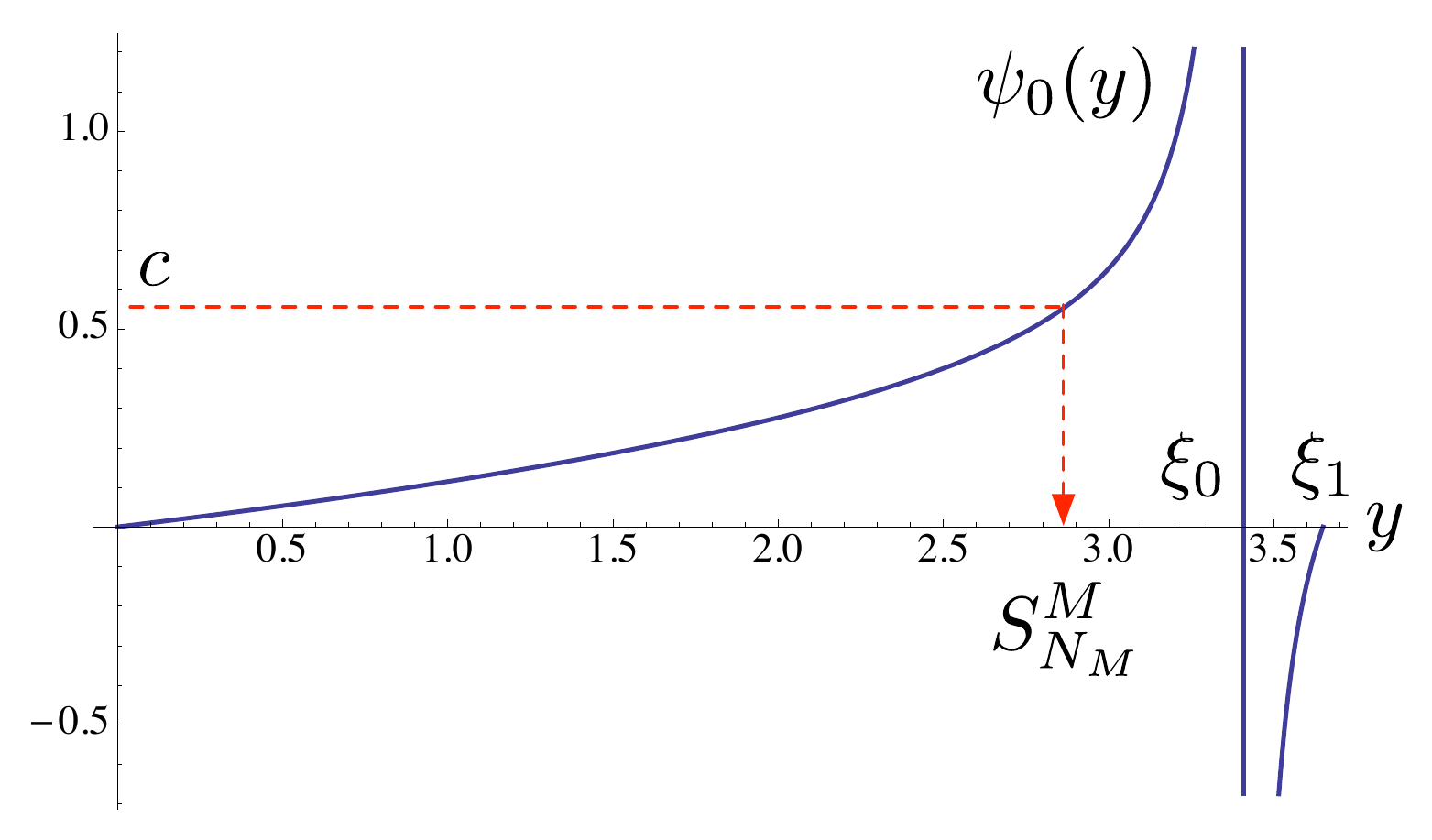}
\end{center}
\caption{
The graph of $\psi_0$ for a three-layer phosphorelay with all reaction constants equal to one and all total amounts equal to ten.
}\label{Fig:Psi0Function}
\end{figure}

The proof of convergence and stability in \ref{Appendix:Stability} also implies existence and uniqueness of a BMSS, but the proof is not constructive and does not yield the additional insight provided by the functions $\psi_m$. On the other hand, our approach does not address the convergence to and the stability of the steady-state, and the two methods thus complement each other.

\section{Stimulus-Response}\label{Sec:StimulusResponse}

In this section we demonstrate how the functions $\psi_m$ may be used to explicitly describe stimulus-response behavior.

\subsection{Maximal response}

Let all total amounts and all reaction constants but the stimulus $c$ be fixed. According to \eqref{Eq:ThePsi0Function}, the stimulus is an increasing continuous function of the response in $[0,\xi_0)$ and hence, \emph{vice versa}, the response is an increasing continuous function of the stimulus. Furthermore, $S^M_{N_M}\rightarrow\xi_0$ for $c\rightarrow\infty$, and therefore $\xi_0$ is the smallest upper limit on all possible responses. The limit is not attainable but can be thought of as the response in a fictitious system with infinite stimulus, and we will refer to it as the \emph{maximal response} of the phosphorelay.

As argued in Section \ref{Sec:ExistenceAndUniquenessOfBMSS}, the rational function $\psi_0$ is the ratio of two polynomials of degree $M$, and calculating the maximal response is thus equivalent to finding the smallest positive root in a polynomial of degree $M$.

More generally, in a phosphorelay with all total amounts and all reaction rates but $c$ fixed, we denote by $\rho_m$ the smallest upper limit of all possible steady-state values of $S^m_{N_m}$ and call it the \emph{maximal response} in the $m$th layer. We have just argued that $\rho_M=\xi_0$, and since according to Proposition \ref{NewProp:PsimSS} all $\psi_m$ are increasing functions on intervals containing $[0,\xi_0)$, we have
\begin{align}\label{Eq:AllMaximalResponses}
\rho_m=\psi_m(\rho_M)\qquad\text{for all}\qquad 1\le m<M.
\end{align}
Since $\psi_m$ is invertible (it is increasing and continuous), we have that $S^M_{N_M}=\psi_m^{-1}(S^m_{N_m})$, and by substituting this into \eqref{NewEq:DefinitionOfPsi} we obtain
\begin{align}\label{Eq:GeneralResponse}
c=(\psi_0\circ\psi_m^{-1})(S^m_{N_m})\qquad\text{for all}\qquad 1\le m\le M,
\end{align}
which is the stimulus expressed as a function of the response in the $m$th layer. Note that since \eqref{Eq:GeneralResponse} involves the inverse of a rational function, it is, in general, not itself a rational function.

The explicit stimulus-response relationship may be used to investigate how changes in one layer $m_0$ are reflected in the maximal responses in all layers of a phosphorelay. Suppose that $\lambda_{m_0}$ or $\mu_{m_0}$ is increased by changing reaction rates or by adding more phosphorylation sites to the substrate $S^{m_0}$ in an existing layer (see \eqref{Eq:LambdaMuConstants}). Then the maximal response decreases (resp.~increases) in layers below (resp.~above) $m_0$. Increasing the total amount $S^{m_0}_{tot}$ has the opposite effect. Then the maximal response increases (resp.~decreases) in layers downstream (resp.~upstream) from layer $m_0$. This is illustrated for $M=5$ in Figure \ref{Fig:StripLayers}A, and proofs of both claims are given in \ref{Prop:AppendixLayerEfftects}. The responses $S^m_{N_m}$ themselves exhibit the same behavior, and a proof of this is included in \ref{Prop:AppendixActualLayerEfftects}. Summing up, these results enable us to predict how all layers in the phosphorelay respond to changes in kinetic parameters and total amounts.

By removing the top layer from the system and adding a new stimulus reaction $S^2_0\stackrel{c}{\rightarrow}S^2_1$, we obtain a smaller phosphorelay with $M-1$ layers, and its maximal final response is the minimal, positive zero of $S^2_{tot}-\lambda_2 y-\mu_2\psi_2(y)$, which is exactly $\xi_1$ from Proposition \ref{NewProp:PsimSS}. In general, removing $m$ layers from the original system results in a smaller system with a larger maximal final response equal to $\xi_m$, and this is illustrated for $M=6$ in Figure \ref{Fig:StripLayers}B. With $m$ layers removed, the phosphorylation $S^{m+1}_0\stackrel{c}{\rightarrow}S^{m+1}_1$ is direct, whereas in the larger system, some of the $S^m_{N_M}$ taking part in the phosphorylation (phosphotransfer) is sequestrated in the intermediate complex $X^m$. 

\begin{figure}[htbp]
\begin{center}
\includegraphics[width=5.25in]{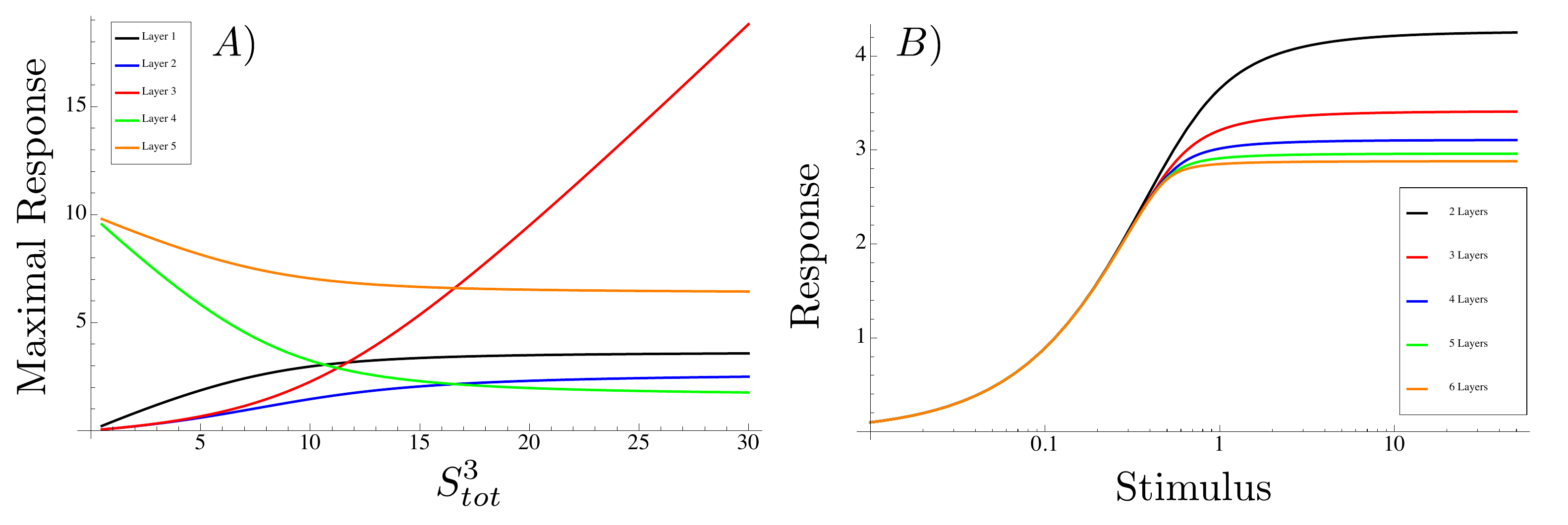}
\end{center}
\caption{
A) Maximal responses in all layers of a five-layer phosphorelay as functions of $S^3_{tot}$. When $S^3_{tot}$ increases, the maximal response increases in layers $1$, $2$, and $3$, but decreases in layers $4$ and $5$.  B) Stimulus-response curves for the bottom layer in phosphorelays with $6$, $5$, $4$, $3$, and $2$ layers, where smaller systems are obtained from larger by removal of upper layers. All reaction rates are set to one, and all total amounts are set to $10$.
}\label{Fig:StripLayers}
\end{figure}

\subsection{Ultrasensitive response}\label{SubSec:UltrasensitiveResponse}

In this section we use the functions $\psi_m$ to describe how steady-state concentrations respond to changes in stimulus. For any $0<\varepsilon<1$, we denote by $c_{m,\varepsilon}$ the amount of stimulus needed in order to obtain $\varepsilon$ times the maximal response in the $m$th layer. That is, using the notation introduced in \eqref{Eq:AllMaximalResponses}, we have
\begin{align*}
c_{m,\varepsilon}=(\psi_0\circ\psi_m^{-1})(\varepsilon\rho_m)=\psi_0(\psi_m^{-1}(\varepsilon\psi_m(\rho_M))).
\end{align*}
The \emph{normalized response} in the $m$th layer is the response $S^m_{N_m}$ divided by its maximal value $\rho_m$, and plotted as a function of $c$ we refer to it as the \emph{normalized stimulus-response curve} for the $m$th layer. The curve consists of the points $(c_{m,\varepsilon},\varepsilon)$ for $0\le\varepsilon\le 1$, that is $\varepsilon$ is the normalized response.

\begin{prop}\label{Prop:ResponseLastLayer}
For all $ m< M$ and $0<\varepsilon< 1$ we have $c_{M,\varepsilon}<c_{m,\varepsilon}$. That is, the normalized stimulus-response curve for the bottom layer is shifted to the left of the normalized stimulus-response curves for all other layers.
\end{prop}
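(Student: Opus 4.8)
The plan is to show that for each fixed normalized level $\varepsilon$, the stimulus required to drive the bottom layer to $\varepsilon\rho_M$ is strictly smaller than the stimulus required to drive any upper layer $m<M$ to $\varepsilon\rho_m$. Since stimulus is an increasing function of the response $S^M_{N_M}$ (because $\psi_0$ is strictly increasing on $[0,\xi_0)$ by Proposition \ref{NewProp:PsimSS}), the inequality $c_{M,\varepsilon}<c_{m,\varepsilon}$ is equivalent to a comparison of the underlying values of $S^M_{N_M}$. Writing $y_M$ for the response value realizing $\varepsilon\rho_M$ in the bottom layer and $y_m$ for the response value realizing $\varepsilon\rho_m$ in layer $m$, the claim becomes $y_M<y_m$. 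By definition $y_M=\varepsilon\rho_M=\varepsilon\xi_0$, while $y_m$ is determined by $\psi_m(y_m)=\varepsilon\rho_m=\varepsilon\psi_m(\xi_0)$. So the whole statement reduces to proving that $\psi_m(\varepsilon\xi_0)<\varepsilon\psi_m(\xi_0)$ for all $m<M$, i.e.\ that each $\psi_m$ lies strictly below its chord through the origin on the relevant interval.

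First I would reformulate the target inequality cleanly: since $\psi_m$ is strictly increasing on $[0,\xi_0)$ (it is increasing on $[0,\xi_m)\supseteq[0,\xi_0)$ because $\xi_0<\xi_m$ by Proposition \ref{NewProp:PsimSS}), the desired $y_M<y_m$ is equivalent to $\psi_m(y_M)<\psi_m(y_m)$, and substituting $\psi_m(y_m)=\varepsilon\psi_m(\xi_0)$ this is exactly $\psi_m(\varepsilon\xi_0)<\varepsilon\psi_m(\xi_0)$. Thus the entire proposition is the statement that $\psi_m$ is \emph{strictly subadditive along rays from the origin}, equivalently that the function $y\mapsto\psi_m(y)/y$ is strictly increasing on $(0,\xi_0)$, evaluated between the points $\varepsilon\xi_0$ and $\xi_0$ with $0<\varepsilon<1$.

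The natural way to establish this is by induction on $m$ running downward from $M$, exploiting the recursive definition \eqref{NewEq:DefinitionOfPsi}. The base case is $\psi_M=\operatorname{id}$, for which $\psi_M(y)/y\equiv1$ is constant; this is the boundary case and suggests the strict inequality only kicks in for $m<M$, consistent with the statement. For the inductive step I would write $\psi_{m-1}(y)=d(\tfrac{v_{m-1}}{w_{m-1}}+1)\,y\big/\big(u_{m-1}(S^m_{tot}-\lambda_m y-\mu_m\psi_m(y))\big)$ and examine the ratio $\psi_{m-1}(y)/y$, whose reciprocal is (up to the positive constant $u_{m-1}/(d(\tfrac{v_{m-1}}{w_{m-1}}+1))$) equal to $S^m_{tot}-\lambda_m y-\mu_m\psi_m(y)$. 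So $\psi_{m-1}(y)/y$ is increasing in $y$ precisely when $y\mapsto S^m_{tot}-\lambda_m y-\mu_m\psi_m(y)$ is decreasing, which holds since $\lambda_m,\mu_m>0$ and $\psi_m$ is increasing on $[0,\xi_0)$. This shows $\psi_{m-1}(y)/y$ is strictly increasing for every $m-1<M$, giving the subadditivity directly without even needing the full inductive hypothesis on the ratio of $\psi_m$ — the monotonicity of $\psi_m$ from Proposition \ref{NewProp:PsimSS} suffices.

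The main obstacle I anticipate is bookkeeping around the denominator rather than any deep difficulty: I must confirm that the quantity $S^m_{tot}-\lambda_m y-\mu_m\psi_m(y)$ stays strictly positive on $(0,\xi_0]$ so that the ratios are well-defined and the sign manipulations are valid. This positivity is exactly the content of the proof of Theorem \ref{Thm:SteadyStates}, where $\xi_{m-1}$ is defined as the smallest positive zero of this expression and $\xi_0<\xi_{m-1}$ guarantees no vanishing on the closed interval up to $\xi_0$; I would cite that to close the gap. Once positivity is secured, the strictness of the inequality follows because $\lambda_m,\mu_m>0$ force strict monotonicity of the denominator, so $\psi_{m-1}(\varepsilon\xi_0)/(\varepsilon\xi_0)<\psi_{m-1}(\xi_0)/\xi_0$ strictly for $0<\varepsilon<1$. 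Rearranging recovers $\psi_{m-1}(\varepsilon\xi_0)<\varepsilon\psi_{m-1}(\xi_0)$, hence $c_{M,\varepsilon}<c_{m-1,\varepsilon}$, completing the argument for all layers $m<M$.
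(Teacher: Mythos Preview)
Your proof is correct and follows essentially the same route as the paper: both reduce $c_{M,\varepsilon}<c_{m,\varepsilon}$ to the sub-chord inequality $\psi_m(\varepsilon\rho_M)<\varepsilon\psi_m(\rho_M)$ via monotonicity of $\psi_0$, and then establish that inequality from the recursive formula \eqref{NewEq:DefinitionOfPsi} by observing that the denominator $S^m_{tot}-\lambda_m y-\mu_m\psi_m(y)$ is strictly decreasing and positive on $[0,\xi_0]$. Your phrasing in terms of the strict monotonicity of $y\mapsto\psi_{m-1}(y)/y$ is a slight streamlining of the paper's downward induction (which, as you correctly note, does not actually need the inductive hypothesis on the ratio---monotonicity of $\psi_m$ from Proposition~\ref{NewProp:PsimSS} already suffices), but the underlying computation is identical.
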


The proof uses induction on $m$ and it is given in \ref{Prop:AppendixResponseLastLayer}. The result in Proposition \ref{Prop:ResponseLastLayer} cannot be extended to compare arbitrary layers, and we now demonstrate how it already fails for $M=3$. In fact, it turns out that, depending on the reaction rates and total amounts, we can have $c_{1,\varepsilon}>c_{2,\varepsilon}$ or $c_{1,\varepsilon}<c_{2,\varepsilon}$ for all $0<\varepsilon<1$, and in some cases the normalized stimulus-response curves for layers one and two intersect as illustrated in Fig~\ref{Fig:ThreePlotsExample}. Since $\psi_1$ and $\psi_0^{-1}$ are increasing functions, comparing $c_{1,\varepsilon}$ and $c_{2,\varepsilon}$ is equivalent to comparing
\begin{align}\label{Eq:ComparisonM3}
(\psi_1\circ\psi_0^{-1})(c_{1,\varepsilon})=
\varepsilon \psi_1(\rho_3)
\:\:\:\text{and}\:\:\:
(\psi_1\circ\psi_0^{-1})(c_{2,\varepsilon})=
\psi_1(\psi_2^{-1}(\varepsilon \psi_2(\rho_3))),
\end{align}
so it follows that comparing $c_{1,\varepsilon}$ and $c_{2,\varepsilon}$ is equivalent to determining the sign of $\Delta(\varepsilon)=\varepsilon \psi_1(\rho_3)-\psi_1(\psi_2^{-1}(\varepsilon \psi_2(\rho_3)))$. The expressions in \eqref{Eq:ComparisonM3} are easier to work with than the original ones, since we may calculate them explicitly. The maximal response $\rho_3$ is a root of a quadratic polynomial, and $\psi_2^{-1}$ may be obtained directly, since $\psi_2$ is the ratio of two first degree polynomials.

\begin{figure}[htbp]
\begin{center}
\includegraphics[width=5.4in]{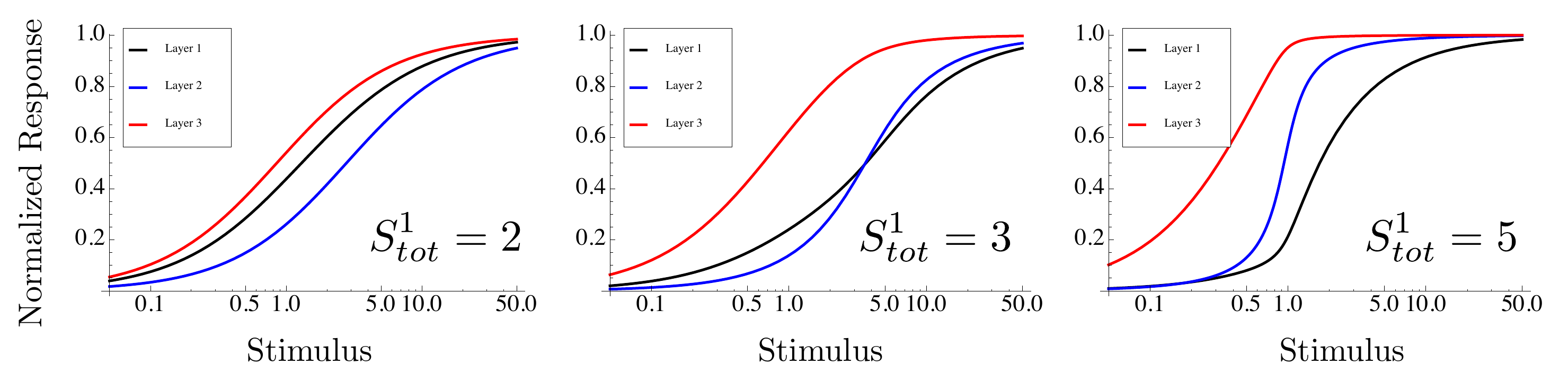}
\end{center}
\caption{
Normalized stimulus-response curves for a three-layer phosphorelay with one phosphorylation site at each layer. Here $S^2_{tot}=20$, $S^3_{tot}=5$, and all reaction rates are set to one. Then $\varepsilon^*=35/\rho_3-15$, where $\rho_3$ depends on $S^1_{tot}$, and by varying $S^1_{tot}$ we obtain three qualitatively different behaviors of the normalized response curves. The stimulus is on logarithmic scale.
}\label{Fig:ThreePlotsExample}
\end{figure}

By manipulating $\Delta(\varepsilon)$ (see \ref{Appendix:Example} for details), we see that its sign is determined by the roots of a polynomial of degree three in $\varepsilon$. We find that both $0$ and $1$ are roots, and the third root is
\begin{align*}
\varepsilon^*=\frac{w_2u_2(S^2_{tot}(\lambda_3+\mu_3)-\lambda_2S^3_{tot})(S^3_{tot}-(\lambda_3+\mu_3)\rho_3)-d (v_2+w_2)\mu_2S^3_{tot}}{d u_1(v_2+w_2)\mu_2(\lambda_3+\mu_3)\rho_3}.
\end{align*}
It depends both on the reaction rates and the total amounts and may be calculated explicitly. We find that if $\varepsilon^*<0$ (resp.~$\varepsilon^*>1$), then $c_{1,\varepsilon}>c_{2,\varepsilon}$  (resp. $c_{1,\varepsilon}<c_{2,\varepsilon}$) for all $0<\varepsilon<1$, whereas if $0< \varepsilon^*< 1$, the normalized stimulus-response curves intersect for $\varepsilon=\varepsilon^*$, and $c_{1,\varepsilon}>c_{2,\varepsilon}$ (resp. $c_{1,\varepsilon}<c_{2,\varepsilon}$) for $\varepsilon<\varepsilon^*$ (resp.  $\varepsilon>\varepsilon^*$).

We now continue our general investigation of how steady-state concentrations respond to changes in stimulus. For $0<\varepsilon<\delta<1$, we consider the \emph{response coefficient}
\begin{align}\label{Eq:ResponseCoefficient}
\chi_{m,\varepsilon,\delta}=\frac{c_{m,\varepsilon}}{c_{m,\delta}}\qquad\text{for}\qquad 1\le m\le M,
\end{align}
which relates the amount of stimulus required to obtain $\varepsilon$ (resp.~$\delta$) times the maximal possible response in the $m$th layer. Since the response in any layer is an increasing function of the stimulus, it follows that
\begin{align*}
0<\chi_m<1\qquad\text{for}\qquad 1\le m\le M.
\end{align*}

Often one puts $\varepsilon=0.1$ and $\delta=0.9$, and larger values of $\chi_{m,0.1,0.9}$ then indicate a switch-like response in the $m$th layer. In the literature, systems with $\chi_{m,0.1,0.9}>1/81$ are often referred to as \emph{ultrasensitive} \cite{Goldbeter:1981tq}. This is illustrated in a five-layer example in Fig.~\ref{Fig:StimulusResponseM5}, where the intermediate layers (in particular the third) exhibit switch-like behaviors. On the other hand, the top layer shows an almost linear increase in response before reaching a plateau.

\begin{figure}[htbp]
\begin{center}
\includegraphics[width=3.25in]{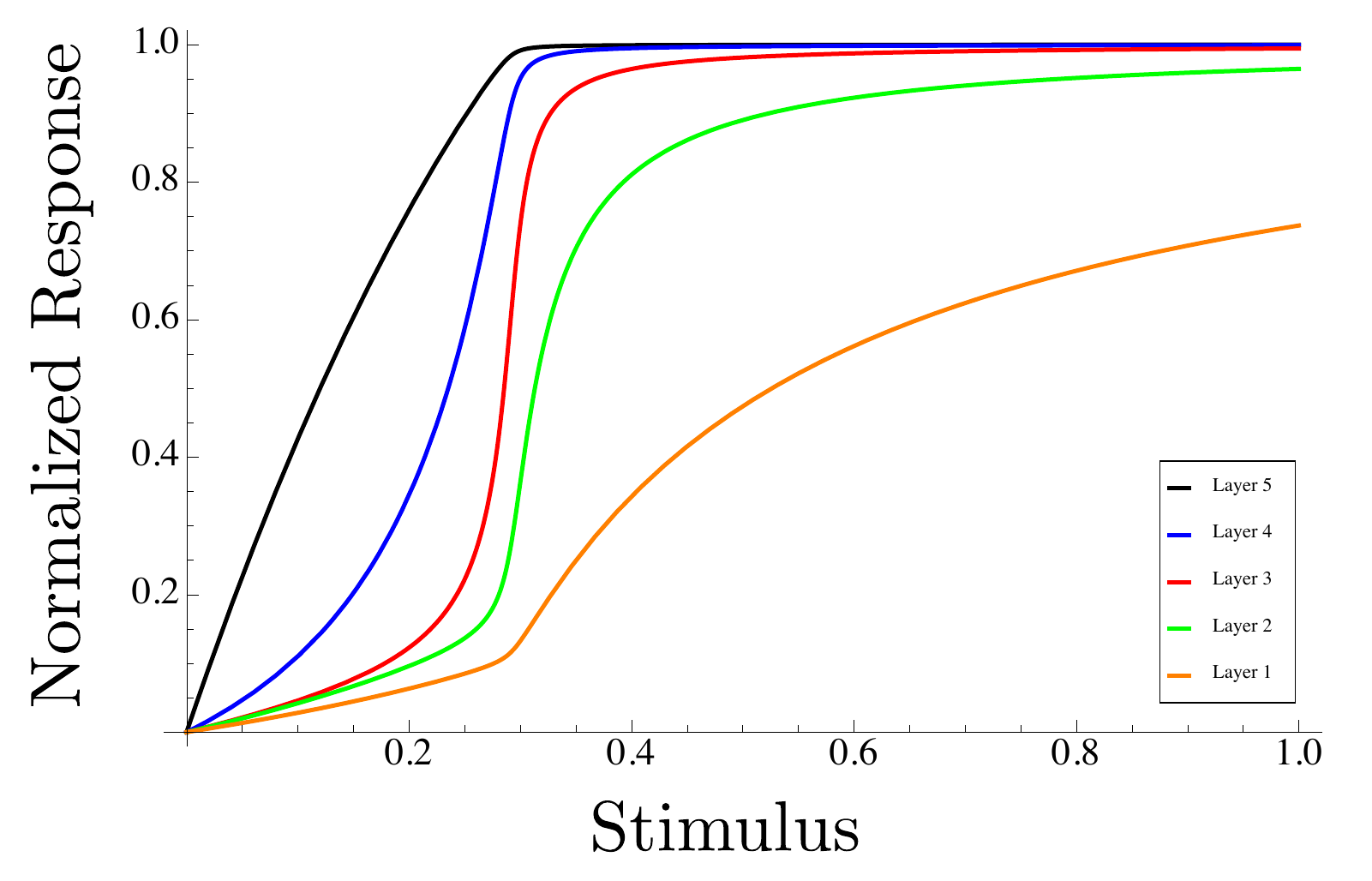}
\end{center}
\caption{
Normalized stimulus-response curves for a five-layer phosphorelay with one phosphorylation site at each layer. Here $S^1_{tot}=S^2_{tot}=S^3_{tot}=S^4_{tot}=10$, $S^5_{tot}=5$, and all reaction rates are set to one. The response in the top layer increases almost linearly before it reaches a plateau. The third layer shows a switch-like (ultrasensitive) behavior.
}\label{Fig:StimulusResponseM5}
\end{figure}

\begin{prop}\label{Prop:SensitivityLastLayer}
For a general $M$-layer phosphorelay we have $\chi_{M,\varepsilon,\delta}<\varepsilon/\delta$.
\end{prop}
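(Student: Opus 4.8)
The plan is to exploit the special fact that $\psi_M=\operatorname{id}$, which makes the bottom layer uniquely tractable. Since $\psi_M^{-1}=\operatorname{id}$, the definition of $c_{M,\varepsilon}$ collapses to $c_{M,\varepsilon}=\psi_0(\varepsilon\rho_M)$, and likewise $c_{M,\delta}=\psi_0(\delta\rho_M)$, where $\rho_M=\xi_0$. The response coefficient $\chi_{M,\varepsilon,\delta}$ is therefore simply a ratio of two values of $\psi_0$ evaluated at two scalings of the maximal response, and the goal reduces to showing $\psi_0(\varepsilon\rho_M)/\psi_0(\delta\rho_M)<\varepsilon/\delta$.

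The next step is to insert the explicit form \eqref{Eq:ThePsi0Function}, namely $\psi_0(y)=dy/D(y)$ with $D(y)=S^1_{tot}-\lambda_1 y-\mu_1\psi_1(y)$. Substituting this into the ratio, the common factors $d\rho_M$ cancel and one obtains
\[
\chi_{M,\varepsilon,\delta}=\frac{\varepsilon}{\delta}\cdot\frac{D(\delta\rho_M)}{D(\varepsilon\rho_M)}.
\]
Thus the inequality $\chi_{M,\varepsilon,\delta}<\varepsilon/\delta$ is equivalent to $D(\delta\rho_M)<D(\varepsilon\rho_M)$, that is, to the strict monotonicity of $D$ across the two points.

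Then I would verify this monotonicity using Proposition \ref{NewProp:PsimSS}. On $(0,\xi_0)$ the function $\psi_0$ is positive and $dy>0$, so $D(y)>0$ there; and since $\lambda_1,\mu_1>0$ while $\psi_1$ is strictly increasing on $[0,\xi_1)$, which contains $[0,\xi_0)$, the denominator $D$ is strictly decreasing on $[0,\xi_0)$. Because $0<\varepsilon<\delta<1$ forces $0<\varepsilon\rho_M<\delta\rho_M<\xi_0$, strict monotonicity yields $0<D(\delta\rho_M)<D(\varepsilon\rho_M)$, so the ratio $D(\delta\rho_M)/D(\varepsilon\rho_M)$ is strictly less than one, and the claim follows.

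The main point — requiring care rather than being a genuine obstacle — is recognizing that the rational structure of $\psi_0$ is exactly what lets the factor $\varepsilon/\delta$ split off cleanly, reducing an inequality about the full composition $\psi_0\circ\psi_m^{-1}$ to a single monotonicity statement about the denominator $D$. This is precisely why the bound takes the clean form $\varepsilon/\delta$ for the bottom layer, and also why the same short argument does \emph{not} extend verbatim to intermediate layers, where the extra factor $\psi_m^{-1}$ is nontrivial and no longer permits such a cancellation.
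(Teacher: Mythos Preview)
Your proof is correct and follows essentially the same approach as the paper: substitute the explicit form of $\psi_0$, factor out $\varepsilon/\delta$, and observe that the remaining ratio of denominators is strictly less than one because $\psi_1$ is increasing on the relevant interval. Your version is slightly more explicit in naming the denominator $D(y)$ and checking its positivity on $(0,\xi_0)$, but the argument is the same.
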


\begin{proof}
It follows immediately by using \eqref{Eq:ThePsi0Function} that
\begin{align*}
\chi_{M,\varepsilon,\delta}=\frac{c_{M,\varepsilon}}{c_{M,\delta}}=\frac{\varepsilon}{\delta}\cdot\frac{S^1_{tot}-\lambda_1\delta\rho_M-\mu_1\psi_1(\delta\rho_M)}{S^1_{tot}-\lambda_1\varepsilon\rho_M-\mu_1\psi_1(\varepsilon\rho_M)},
\end{align*}
and since $\psi_1(y)$ according to Proposition \ref{NewProp:PsimSS}  is an increasing function, it follows that in the second fraction, the numerator is smaller than the denominator, and this proves the claim.
\end{proof}

The result in Proposition \ref{Prop:SensitivityLastLayer} shows that for any set of reaction rates and total amounts, the degree of ultrasensitivity in the bottom layer is bounded by the same constant ($\chi_{m,0.1,0.9}<1/9$), and this is thus an intrinsic feature of the phosphorelay.

We are unaware whether the response coefficients $\chi_{m,\varepsilon,\delta}$ are bounded for general $m$. Numerical experiments indicate that also $\chi_{1,\varepsilon,\delta}<\varepsilon/\delta$, but we have not been able to determine this analytically. However, it is possible to calculate the response coefficients in some limit cases, for example when the total amount $S^1_{tot}$ in the top layer is increased or decreased, which could e.g.~be used in an experimental setup where $S^1_{tot}$ can be controlled. The proof of Proposition \ref{Prop:ResponseLimits} below is given in \ref{Appendix:Proofs}.

\begin{prop}\label{Prop:ResponseLimits}
Let $1\le m\le M$ and $1<\varepsilon<\delta<1$, and let all reaction constants and all total amounts except $S^1_{tot}$ be fixed. Then
\begin{align*}
\chi_{m,\varepsilon,\delta}&\rightarrow\frac{\varepsilon(1-\delta)}{\delta(1-\varepsilon)}
\qquad\text{for}\qquad S^1_{tot}\rightarrow 0\\
\chi_{m,\varepsilon,\delta}&\rightarrow
\begin{cases}
\frac{1-\delta}{1-\varepsilon}&\text{if }m=1\\
\frac{\psi_m^{-1}(\varepsilon\psi_m(\xi_1))}{\psi_m^{-1}(\delta\psi_m(\xi_1))}&\text{if }1<m<M\\
\frac{\varepsilon}{\delta}&\text{if }m=M
\end{cases}
\qquad\text{for}\qquad S^1_{tot}\rightarrow\infty,
\end{align*}
and $(\psi_m^{-1}(\varepsilon\psi_m(\xi_1)))/(\psi_m^{-1}(\delta\psi_m(\xi_1)))>\varepsilon/\delta$.
\end{prop}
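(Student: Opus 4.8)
The plan is to reduce everything to the single scalar $\rho_M=\xi_0$, exploiting that $S^1_{tot}$ enters the system only through $\psi_0$ while $\psi_1,\dots,\psi_M$ and the thresholds $\xi_1,\dots,\xi_{M-1}$ are independent of $S^1_{tot}$. Abbreviating $y_{m,\varepsilon}=\psi_m^{-1}(\varepsilon\psi_m(\rho_M))$ and using \eqref{Eq:ThePsi0Function}, I would first record the factorization
\[
\chi_{m,\varepsilon,\delta}=\frac{\psi_0(y_{m,\varepsilon})}{\psi_0(y_{m,\delta})}=\frac{y_{m,\varepsilon}}{y_{m,\delta}}\cdot\frac{S^1_{tot}-\lambda_1 y_{m,\delta}-\mu_1\psi_1(y_{m,\delta})}{S^1_{tot}-\lambda_1 y_{m,\varepsilon}-\mu_1\psi_1(y_{m,\varepsilon})}.
\]
Since $\rho_M=\xi_0$ is by definition the zero of the denominator of $\psi_0$, it satisfies $S^1_{tot}=\lambda_1\rho_M+\mu_1\psi_1(\rho_M)$, and substituting this turns numerator and denominator of the second factor into $\lambda_1(\rho_M-y)+\mu_1(\psi_1(\rho_M)-\psi_1(y))$. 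Because $\lambda_1 y+\mu_1\psi_1(y)$ increases continuously from $0$ to $\infty$ as $y$ runs over $[0,\xi_1)$ (Proposition~\ref{NewProp:PsimSS}), the defining equation for $\rho_M$ gives $\rho_M\to 0$ as $S^1_{tot}\to 0$ and $\rho_M\to\xi_1$ as $S^1_{tot}\to\infty$; these two regimes drive the two asserted limits.

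For $S^1_{tot}\to 0$ all of $y_{m,\varepsilon},y_{m,\delta},\rho_M$ tend to $0$, so I would linearize. Each $\psi_m$ is rational with $\psi_m(0)=0$ and, by induction on \eqref{NewEq:DefinitionOfPsi} (the denominator equals $S^m_{tot}>0$ at $y=0$), finite $\psi_m'(0)>0$. Hence $y_{m,\varepsilon}=\psi_m^{-1}(\varepsilon\psi_m(\rho_M))\sim\varepsilon\rho_M$ and $y_{m,\delta}\sim\delta\rho_M$ to leading order, so $y_{m,\varepsilon}/y_{m,\delta}\to\varepsilon/\delta$, while $\lambda_1(\rho_M-y)+\mu_1(\psi_1(\rho_M)-\psi_1(y))\sim(\lambda_1+\mu_1\psi_1'(0))(\rho_M-y)$ forces the second factor to $(1-\delta)/(1-\varepsilon)$. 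Multiplying yields $\varepsilon(1-\delta)/(\delta(1-\varepsilon))$, uniformly in $m$.

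For $S^1_{tot}\to\infty$ I would exploit the dominant balance coming from $\psi_1(\rho_M)\to\infty$, splitting into three cases. For $m=M$ ($\psi_M=\operatorname{id}$) one has $y_{M,\varepsilon}=\varepsilon\rho_M$, so $y_{M,\varepsilon}/y_{M,\delta}=\varepsilon/\delta$ exactly, while both instances of $\lambda_1(\rho_M-y)+\mu_1(\psi_1(\rho_M)-\psi_1(y))$ are dominated by the common blow-up $\mu_1\psi_1(\rho_M)$ (since $\psi_1(\varepsilon\rho_M)\to\psi_1(\varepsilon\xi_1)<\infty$), so the second factor $\to 1$ and $\chi_{M,\varepsilon,\delta}\to\varepsilon/\delta$. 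For $m=1$ the identity $\psi_1(y_{1,\varepsilon})=\varepsilon\psi_1(\rho_M)$ makes the second factor $[\lambda_1(\rho_M-y_{1,\delta})+\mu_1(1-\delta)\psi_1(\rho_M)]/[\lambda_1(\rho_M-y_{1,\varepsilon})+\mu_1(1-\varepsilon)\psi_1(\rho_M)]\to(1-\delta)/(1-\varepsilon)$, whereas $y_{1,\varepsilon},y_{1,\delta}\to\xi_1$ force $y_{1,\varepsilon}/y_{1,\delta}\to 1$, giving $\chi_{1,\varepsilon,\delta}\to(1-\delta)/(1-\varepsilon)$. For $1<m<M$ we have $\xi_1<\xi_m$, so $\psi_m(\rho_M)\to\psi_m(\xi_1)$ is finite and $y_{m,\varepsilon}\to\psi_m^{-1}(\varepsilon\psi_m(\xi_1))<\xi_1$; thus $\psi_1(y_{m,\varepsilon})$ stays bounded, the second factor is again dominated by $\mu_1\psi_1(\rho_M)$ and tends to $1$, leaving $\chi_{m,\varepsilon,\delta}\to\psi_m^{-1}(\varepsilon\psi_m(\xi_1))/\psi_m^{-1}(\delta\psi_m(\xi_1))$.

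For the closing inequality I would prove that $y\mapsto\psi_m(y)/y$ is strictly increasing on $(0,\xi_m)$, which is immediate from \eqref{NewEq:DefinitionOfPsi}: up to relabeling, its reciprocal equals $u_m\big(S^{m+1}_{tot}-\lambda_{m+1}y-\mu_{m+1}\psi_{m+1}(y)\big)/\big(d(v_m/w_m+1)\big)$, whose bracket is strictly decreasing because $\lambda_{m+1},\mu_{m+1}>0$ and $\psi_{m+1}$ is increasing. Consequently $s\mapsto\psi_m^{-1}(s)/s$ is strictly decreasing, so for $0<\varepsilon<\delta$ and $a=\psi_m(\xi_1)>0$ we get $\psi_m^{-1}(\varepsilon a)/(\varepsilon a)>\psi_m^{-1}(\delta a)/(\delta a)$, which rearranges to $\psi_m^{-1}(\varepsilon\psi_m(\xi_1))/\psi_m^{-1}(\delta\psi_m(\xi_1))>\varepsilon/\delta$. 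I expect the main difficulty to lie in the $S^1_{tot}\to\infty$ analysis, where one must identify case by case whether the surviving contribution comes from the ratio of the $y$'s or from the ratio of the blowing-up denominators; the three sub-cases are governed precisely by whether $\rho_M\to\xi_1$ drives $y_{m,\varepsilon}$ into the singularity of $\psi_1$ or keeps it strictly below, and turning these dominant-balance heuristics into rigorous limits is the crux.
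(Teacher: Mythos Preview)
Your proof is correct and follows essentially the same route as the paper: the same factorization of $\chi_{m,\varepsilon,\delta}$ into a ratio of $y$'s times a ratio of $\psi_0$-denominators (the paper's $\alpha_m\cdot\beta_m$), the same observation that $\rho_M\to 0$ or $\xi_1$ via the monotone map $S^1_{tot}=\lambda_1\rho_M+\mu_1\psi_1(\rho_M)$, and the same three-way case split at infinity. Your linearization at $S^1_{tot}\to 0$ is the first-order Taylor version of the paper's L'H\^opital computation (which is packaged there as separate lemmas computing $\psi_m'(0)$ and $\rho_M'(0)$). The one substantive difference is the closing inequality: the paper asserts that $\psi_m$ is convex and deduces concavity of $\psi_m^{-1}$, whereas you read off directly from \eqref{NewEq:DefinitionOfPsi} that $y/\psi_m(y)$ is a positive constant times the strictly decreasing bracket $S^{m+1}_{tot}-\lambda_{m+1}y-\mu_{m+1}\psi_{m+1}(y)$, hence $\psi_m(y)/y$ is increasing. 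Your argument is more self-contained here, since the paper never actually proves the convexity it invokes.
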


For $\varepsilon=0.1$ and $\delta=0.9$, the limit $[\varepsilon(1-\delta)]/[\delta(1-\varepsilon)]$ is $1/81$, the common threshold for ultrasensitivity, and for sufficiently high $S^1_{tot}$, the intermediate layers will exhibit higher degrees of ultrasensitivity than the bottom layer. Note that the limits are not necessarily global bounds on the response coefficients as illustrated in Fig.~\ref{Fig:ReponseCoefficientDip}.

\begin{figure}[htbp]
\begin{center}
\includegraphics[width=5.25in]{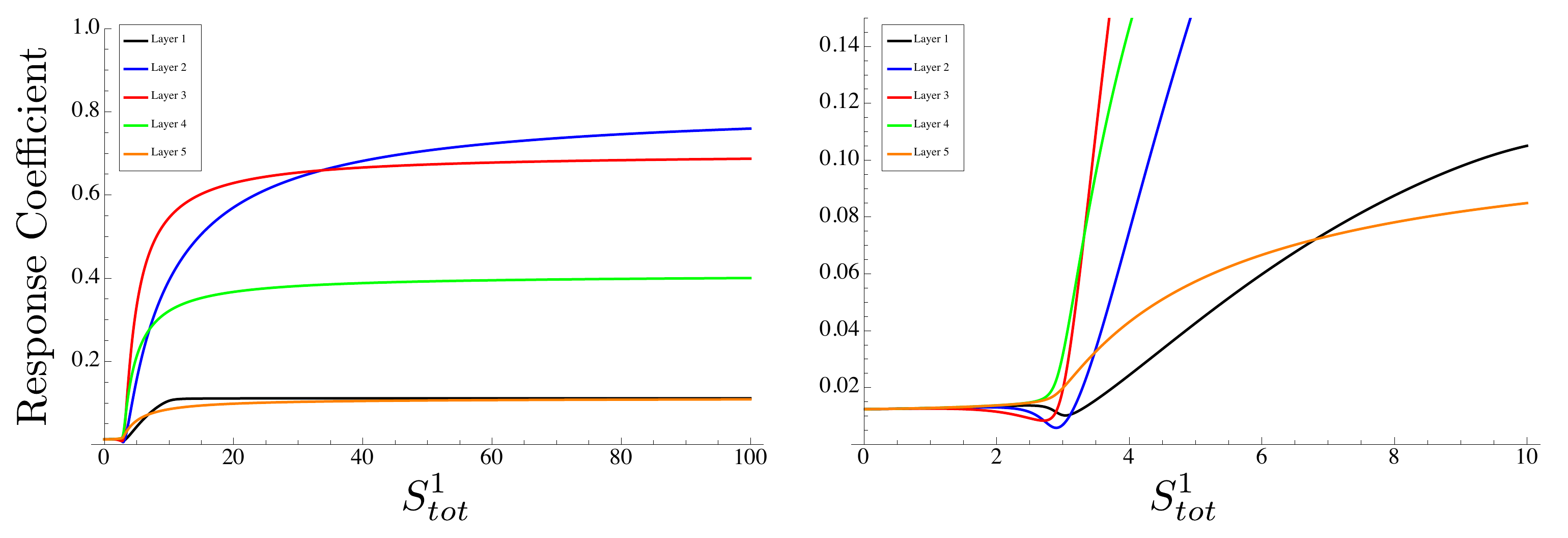}
\end{center}
\caption{
Response coefficients for a five-layer phosphorelay with one phosphorylation site at each layer. Here $S^2_{tot}=S^3_{tot}=S^4_{tot}=10$, $S^5_{tot}=5$, and all reaction rates are set to one. The second plot emphasizes the behavior for small values of $S^1_{tot}$ and reveals that response coefficients are not necessarily increasing functions of $S^1_{tot}$.
}\label{Fig:ReponseCoefficientDip}
\end{figure}

\section{Discussion}

In this paper we have introduced and analytically analyzed a general model of phosphorelays, which extends existing models of phosphorelays \cite{Kim:2006hy,CsikaszNagy:2011dm}, and we have proved the existence and uniqueness of a steady-state. Furthermore, we have derived explicit formulas for the responses in all layers as functions of the stimulus and used these to investigate various aspects of the stimulus-response behavior.

We have showed that the response coefficient in the bottom layer of any phosphorelay is bounded by constants independent of size and architecture of the phosphorelay. Furthermore, we have also demonstrated how qualitatively very different stimulus-response behaviors are possible in layers above the last, and in the three-layer case we have derived an exact condition which distinguishes the three possible scenarios in that case. The variety of behaviors contrasts what has been reported in previous studies using numerical simulations \cite{CsikaszNagy:2011dm}.

The finding that the response coefficient in the bottom layer is bounded is consistent with experimental findings. For example, in the four-layer phosphorelay involved in sporulation initiation of \emph{B. subtilis} (see Figure~\ref{Fig:FourRelays}), it has been observed that the response (the concentration of Spo0A$\sim$P) is only gradually increasing with the stimulus \cite{Fujita:2005fv}. In fact, the authors argue that this is an essential feature of the phosphorelay, since by increasing the stimulus, sporulation is observed, whereas by artificially activating only Spo0A$\sim$P, hardly any sporulation takes place. It has been shown the expression of at least $121$ genes is directly regulated by Spo0A$\sim$P, and that these are activated/repressed at very different concentrations of Spo0A$\sim$P \cite{Molle:2003gk}. It is therefore speculated that the intricate process of sporulation initiation requires several steps mediated by the activity of some of these genes. This suggests the importance of gradually increasing the concentration of Spo0A$\sim$P, since a rapid increase would bypass the intermediate steps \cite{Fujita:2005fv}.

Ultrasensitivity in intermediate layers has previously been suggested using a simpler model \cite{CsikaszNagy:2011dm}, but to our knowledge no experimental studies have determined whether or not this happens \emph{in vitro}. The presence of intermediate layers allows for additional control of the response, where e.g.~an increase in stimulus my be counteracted upon by removal of phosphate in an intermediate layer. For example, there are phosphatases RapA, RapB, and RapE, which are known to dephosphorylate Spo0F$\sim$P, and Spo0E which is known to dephosphorylate Spo0A$\sim$P \cite{Perego:1996wv,Saito:2001vr,Ohlsen:1994up}. We speculate that the ultrasensitivity in intermediate layers is essential in facilitating switch-like cross-talk with external pathways.

We have argued that as the number of layers is increased, the maximal final response decreases. In the example in Figure~\ref{Fig:StripLayers}, the effect appears to saturate already for five layers, a feature not specific to the selected values of reaction constants and total amounts. In fact, the saturation is often observed even earlier. This suggest that the saturation is an intrinsic feature of the phosphorelay structure itself and not the specific reaction rates and total amounts. This fits with the fact that all known phosphorelays to this day contain at most four sites \cite{Appleby:1996wm}.

This paper demonstrates that even relatively complicated systems such as phosphorelays may be treated analytically. Important features of chemical reaction networks may be overlooked if one resorts to numerical simulations alone. Using our approach, previously developed and applied to signaling cascades and enzymatic reactions \cite{Feliu:2011iu,Feliu:2011extra,Feliu:2011wc}, we are able to derive exact and qualitative results about steady-states and stimulus-response behavior for any phosphorelay independent of the number and length of layers, reaction constants, and total amounts of substrate. Also this approach, by providing simple (recursive) expressions relating species concentrations at steady-state, allows for fast and efficient numerical analysis thereby avoiding computationally demanding and error-prone calculations of e.g.~steady-state values.

\appendix

\section{Proofs}
\label{Appendix:Proofs}

\begin{prop}\label{Prop:AppendixLayerEfftects}
If $\lambda_{m_0}$ or $\mu_{m_0}$ is increased, or if the total amount of substrate $S^{m_0}_{tot}$ is decreased, then the maximal response decreases in layers $m\ge m_0$ and increases in layers $m<m_0$.
\end{prop}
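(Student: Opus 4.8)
The plan is to exploit the fact that the perturbation enters the recursion \eqref{NewEq:DefinitionOfPsi}--\eqref{Eq:ThePsi0Function} in a single place: the constants $S^{m_0}_{tot}$, $\lambda_{m_0}$, $\mu_{m_0}$ appear only in the step defining $\psi_{m_0-1}$ from $\psi_{m_0}$, namely in its denominator $D_{m_0-1}(y)=S^{m_0}_{tot}-\lambda_{m_0}y-\mu_{m_0}\psi_{m_0}(y)$. Treating the reduced constants $S^m_{tot},\lambda_m,\mu_m$ as the variables and writing perturbed quantities with a tilde, every function $\psi_m$ with $m\ge m_0$ is unchanged, while increasing $\lambda_{m_0}$ or $\mu_{m_0}$, or decreasing $S^{m_0}_{tot}$, all strictly decrease $D_{m_0-1}(y)$ for $0<y<\xi_{m_0}$ (there $\psi_{m_0}(y)>0$ and $\lambda_{m_0},\mu_{m_0}>0$). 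This common decrease of $D_{m_0-1}$ is the single input driving everything.

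First I would prove a \emph{propagation lemma}: if $\tilde\psi_j>\psi_j$ on $(0,\tilde\xi_j)$ with $\tilde\xi_j\le\xi_j$, then $\tilde\xi_{j-1}<\xi_{j-1}$ and $\tilde\psi_{j-1}>\psi_{j-1}$ on $(0,\tilde\xi_{j-1})$. Since $\xi_{j-1}$ is the smallest positive zero of the decreasing function $D_{j-1}(y)=S^j_{tot}-\lambda_j y-\mu_j\psi_j(y)$ (Proposition \ref{NewProp:PsimSS}) and $\mu_j>0$, replacing $\psi_j$ by the larger $\tilde\psi_j$ makes $\tilde D_{j-1}<D_{j-1}$, so $0=\tilde D_{j-1}(\tilde\xi_{j-1})<D_{j-1}(\tilde\xi_{j-1})$ forces $\tilde\xi_{j-1}<\xi_{j-1}$; the quotient $\psi_{j-1}=K_{j-1}y/D_{j-1}(y)$, with $K_{j-1}=d(v_{j-1}/w_{j-1}+1)/u_{j-1}>0$, then increases wherever its denominator stays positive. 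Starting from the base fact $\tilde D_{m_0-1}<D_{m_0-1}$ and iterating down to $j=1$ yields $\tilde\xi_0<\xi_0$, i.e.\ $\tilde\rho_M<\rho_M$. The downstream claim is then immediate: for $m\ge m_0$ we have $\rho_m=\psi_m(\rho_M)$ with $\psi_m$ unchanged and strictly increasing, so $\tilde\rho_m=\psi_m(\tilde\rho_M)<\psi_m(\rho_M)=\rho_m$.

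For the upstream layers ($m<m_0$, so $m_0\ge 2$) the pointwise bound $\tilde\psi_m>\psi_m$ does not suffice, since it must compete with the decrease of $\rho_M$; instead I would use the conservation-type relations obtained by evaluating the recursion at $y=\rho_M$ (using $\psi_m(\rho_M)=\rho_m$ from \eqref{Eq:AllMaximalResponses}). Because $\rho_M=\xi_0$ annihilates the denominator of $\psi_0$, we get $S^1_{tot}=\lambda_1\rho_M+\mu_1\rho_1$, holding with identical constants for both systems; subtracting gives $\mu_1(\tilde\rho_1-\rho_1)=\lambda_1(\rho_M-\tilde\rho_M)>0$, hence $\tilde\rho_1>\rho_1$. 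For $2\le m\le m_0-1$ the reverse recursion reads $\mu_m\rho_m=S^m_{tot}-\lambda_m\rho_M-K_{m-1}\rho_M/\rho_{m-1}$ with all constants unperturbed, and subtracting the perturbed identity gives $\mu_m(\tilde\rho_m-\rho_m)=\lambda_m(\rho_M-\tilde\rho_M)+K_{m-1}(\rho_M/\rho_{m-1}-\tilde\rho_M/\tilde\rho_{m-1})$; an induction in which $\tilde\rho_M<\rho_M$ and $\tilde\rho_{m-1}>\rho_{m-1}$ are already known makes both terms on the right positive, whence $\tilde\rho_m>\rho_m$.

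The main obstacle I anticipate is this upstream step: the naive comparison of the $\psi_m$ points the wrong way once $\rho_M$ also moves, so the crux is recognizing that the equalities $S^1_{tot}=\lambda_1\rho_M+\mu_1\rho_1$ and $\rho_{m-1}(S^m_{tot}-\lambda_m\rho_M-\mu_m\rho_m)=K_{m-1}\rho_M$---which have fixed coefficients precisely for $m<m_0$---convert the two opposing monotonicities into a sign-definite induction. Care is also needed in the propagation lemma to track the shrinking domains $(0,\tilde\xi_j)$ and to keep every inequality strict; the case $m_0=1$ (no upstream layers, base step directly giving $\tilde\xi_0<\xi_0$) and the endpoints $\xi_M=\xi_{M+1}=\infty$ should be checked separately but are routine.
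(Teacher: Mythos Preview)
Your proposal is correct and follows essentially the same route as the paper's proof: both observe that the perturbation only affects the denominator $D_{m_0-1}$, propagate the inequality $\tilde\psi_m>\psi_m$ down to $m=0$ to obtain $\tilde\rho_M<\rho_M$, deduce the downstream claim from monotonicity of the unchanged $\psi_m$, and then handle the upstream layers by rewriting the recursion at $y=\rho_M$ as $\mu_m\rho_m=S^m_{tot}-\lambda_m\rho_M-K_{m-1}\rho_M/\rho_{m-1}$ (with unperturbed coefficients for $m<m_0$) and inducting upward from the base identity $S^1_{tot}=\lambda_1\rho_M+\mu_1\rho_1$. Your version is slightly more explicit about tracking the shrinking domains $(0,\tilde\xi_j)$ in the propagation step, but the argument is the same.
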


\begin{proof}
We add bars over functions and constants in the modified system to distinguish them from the original ones. The function $\psi_m$ only depends on reaction constants and total amounts in layers $m+1,m+2,\ldots,M$. This implies that $\overline{\psi}_m$ corresponding to the modified system is equal to $\psi_m$ for $m\ge m_0$. However, according to \eqref{NewEq:DefinitionOfPsi} either one of the two modifications will decrease the denominator of $\psi_{m_0-1}(y)$ and hence increase $\psi_{m_0-1}(y)$. Therefore $\overline{\psi}_{m_0-1}(y)>\psi_{m_0-1}(y)$ for all $y>0$ in the overlap of the domains of definition of $\overline{\psi}_{m_0-1}$ and $\psi_{m_0-1}$, and using the recursive definition of $\psi_{m-1}$ \eqref{NewEq:DefinitionOfPsi}, we obtain $\overline{\psi}_m(y)>\psi_m(y)$ for all $m<m_0$. Therefore,
\begin{align}\label{Eq:OneIneq}
\overline{S}^{m}_{tot}-\overline{\lambda}_{m}y-\overline{\mu}_{m}\overline{\psi}_{m}(y)
<S^{m}_{tot}-{\lambda}_{m}y-{\mu}_{m}\psi_{m}(y)\quad\text{for}\quad m\le m_0,
\end{align}
where the case $m=m_0$ follows from the assumption that either $\overline{S}^{m_0}_{tot}<S^{m_0}_{tot}$, $\overline{\mu}_{m_0}>\mu_{m_0}$, or $\overline{\lambda}_{m_0}>\lambda_{m_0}$.

Since, by definition, $\xi_{m-1}$ (resp.~$\overline{\xi}_{m-1}$) is the smallest, positive zero of the right-hand side (resp.~left-hand side) of \eqref{Eq:OneIneq}, we see that $\overline{\xi_{m}}<\xi_m$ for all $m<m_0$. In particular, $\overline{\rho}_M=\overline{\xi_0}<\xi_0=\rho_M$, and since the $\overline{\psi}_m$ are unchanged for $m\ge m_0$, we get $\overline{\rho}_m=\overline{\psi}_m(\overline{\rho}_M)={\psi}_m(\overline{\rho}_M)<\psi_m(\rho_M)=\rho_m$ for $m\ge m_0$.

It remains to show that the response decreases in layers upstream from $m_0$, and we first consider the case $m=1$. By the definitions of $\rho_M$ and $\overline{\rho}_M$, we have $S^1_{tot}-\lambda_1\rho_M-\mu_1 \psi_1(\rho_M)=0$ and $S^1_{tot}-\lambda_1\overline{\rho}_M-\mu_1\overline{\psi}_1(\overline{\rho}_M)=0$, so
\begin{align*}
\overline{\rho}_1=\overline{\psi}_1(\overline{\rho}_M)=\frac{S^1_{tot}-\lambda_1\overline{\rho}_M}{\mu_1}
>\frac{S^1_{tot}-\lambda_1\rho_M}{\mu_1}=\psi_1(\rho_M)=\rho_1.
\end{align*}
Using \eqref{NewEq:DefinitionOfPsi} recursively, we see that
\begin{align*}
\psi_{m+1}(\rho_M)=\frac{S^{m+1}_{tot}-\lambda_{m+1}\rho_M}{\mu_{m+1}}-\frac{d\Big(\frac{v_m}{w_n}+1\Big)\rho_M}{u_m\mu_{m+1}\psi_m(\rho_M)},
\end{align*}
which for $m<m_0-1$ does not involve the modified parameters, and since $\psi_1(\overline{\rho}_M)>\psi_1(\rho_M)$, induction shows that $\overline{\rho}_m=\psi_m(\overline{\rho}_M)>\psi_m(\rho_M)=\rho_m$ for all $m<m_0$, which concludes the proof.
\end{proof}

\begin{prop}\label{Prop:AppendixActualLayerEfftects}
If $\lambda_{m_0}$ or $\mu_{m_0}$ is increased, or if the total amount of substrate $S^{m_0}_{tot}$ is decreased, then the response $S^m_{N_m}$ decreases in layers $m\ge m_0$ and increases in layers $m<m_0$.
\end{prop}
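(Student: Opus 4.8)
The plan is to reduce everything to the behaviour of the \emph{actual} final response at a fixed stimulus $c$ and then propagate monotonicity through the chain (\textbf{SS4})--(\textbf{SS5}). Write $r=S^M_{N_M}$ for the original system and $\overline r$ for the modified one (bars denoting modified quantities, as in Proposition~\ref{Prop:AppendixLayerEfftects}). The first step is to show $\overline r<r$. Each of the three modifications decreases the denominator of $\psi_{m_0-1}$, so the argument in Proposition~\ref{Prop:AppendixLayerEfftects} gives $\overline\psi_m=\psi_m$ for $m\ge m_0$ and $\overline\psi_m(y)>\psi_m(y)$ for $m<m_0$; in particular $\overline\psi_0>\psi_0$. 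Since $c=\overline\psi_0(\overline r)>\psi_0(\overline r)$ while $c=\psi_0(r)$, and $\psi_0$ is strictly increasing by Proposition~\ref{NewProp:PsimSS}, we conclude $\overline r<r$.

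The downstream layers are then immediate. For $m\ge m_0$ we have $\overline\psi_m=\psi_m$, so $\overline S^m_{N_m}=\psi_m(\overline r)<\psi_m(r)=S^m_{N_m}$ because $\psi_m$ is increasing and $\overline r<r$. This yields the asserted decrease in all layers $m\ge m_0$.

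The upstream region $m<m_0$ is the delicate part, and here I would abandon the composed functions $\psi_m$ and work directly with the concentrations. The plan is an induction on $m=1,2,\ldots,m_0-1$ proving \emph{simultaneously} that $\overline S^m_0<S^m_0$ and $\overline S^m_{N_m}>S^m_{N_m}$. For the base case $m=1$, (\textbf{SS1}) gives $S^1_0=(d/c)\,r$, so $\overline r<r$ forces $\overline S^1_0<S^1_0$; then (\textbf{SS4}), with $\lambda_1,\mu_1,S^1_{tot}$ unchanged (as $m_0\ge 2$), reads $\mu_1 S^1_{N_1}=S^1_{tot}-\lambda_1 r-S^1_0$, and both $\overline r<r$ and $\overline S^1_0<S^1_0$ push $\overline S^1_{N_1}$ above $S^1_{N_1}$. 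For the inductive step, (\textbf{SS5}) relating layers $m$ and $m+1$ reads $u_m S^m_{N_m} S^{m+1}_0=d\big(\frac{v_m}{w_m}+1\big) r$, whose right-hand side is a fixed positive multiple of $r$ with coefficient depending only on rates in layer $m$ (unchanged, since $m\le m_0-2$); the inductive increase of $S^m_{N_m}$ together with the decrease of $r$ forces $\overline S^{m+1}_0<S^{m+1}_0$, and feeding this into (\textbf{SS4}) for layer $m+1$ (again with unchanged parameters, as $m+1\le m_0-1$) gives $\overline S^{m+1}_{N_{m+1}}>S^{m+1}_{N_{m+1}}$, closing the induction.

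The main obstacle is precisely this upstream step: one cannot conclude anything from $\overline\psi_m>\psi_m$ together with $\overline r<r$, since the function grows but its argument shrinks. The resolution is to couple the two inequalities on $S^m_0$ and $S^m_{N_m}$ and push them down the linear backbone one substitution at a time, so that at each layer (\textbf{SS4}) converts the sign statement about $S^m_0$ into one about $S^m_{N_m}$, and (\textbf{SS5}) carries the latter into the next layer's $S^{m+1}_0$. Positivity of all steady-state concentrations (Theorem~\ref{Thm:SteadyStates}) guarantees that every division is legitimate and every inequality strict.
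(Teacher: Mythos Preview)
Your argument is correct and follows essentially the same route as the paper. Both proofs first compare $\overline r$ with $r$ via $\psi_0$, dispose of the downstream layers $m\ge m_0$ using $\overline\psi_m=\psi_m$, and then for $m<m_0$ run an induction from $m=1$ upward in which the key identity is the rearranged recursion $\psi_{m+1}(y)=\frac{S^{m+1}_{tot}-\lambda_{m+1}y}{\mu_{m+1}}-\frac{d(v_m/w_m+1)y}{u_m\mu_{m+1}\psi_m(y)}$; you simply phrase this in terms of the concentrations $S^m_0,S^m_{N_m}$ and equations (\textbf{SS1}), (\textbf{SS4}), (\textbf{SS5}) rather than in terms of the $\psi_m$, which is the same computation in different bookkeeping. (Incidentally, the paper's printed proof has all the inequality signs reversed---a typo cascade starting from writing $\overline\psi_m<\psi_m$ where Proposition~\ref{Prop:AppendixLayerEfftects} established $\overline\psi_m>\psi_m$---so your version is in fact the cleaner record of the intended argument.)
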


\begin{proof}
As already argued in the proof of \ref{Prop:AppendixActualLayerEfftects}, we have $\overline{\psi}_m=\psi_m$ (resp.~$\overline{\psi_m}<\psi_m$) for $m\ge m_0$ (resp.~$m<m_0$) in the overlaps of the domains of definition of the respective functions.

Let $S^m_{N_m}$ and $\overline{S}^m_{N_m}$ be the steady-state values of the response in the $m$th layer in each of the systems. We have $\overline{\psi}_0(\overline{S}^M_{N_M})=c=\psi_0(S^M_{N_M})>\overline{\psi}_0(S^M_{N_M})$, and since $\overline{\psi}_0$ is increasing it follows that $S^M_{N_M}<\overline{S}^M_{N_M}$. Then using that $\psi_m$ is increasing, we get $\overline{S}^m_{N_m}=\overline{\psi}_m(\overline{S}^M_{N_M})=\psi_m(\overline{S}^M_{N_M})>\psi_m(S^M_{N_M})=S^m_{N_m}$ for all $m\ge m_0$. By isolating $\psi_1(y)$ in \eqref{Eq:ThePsi0Function} and using $\overline{\psi}_0(\overline{S}^M_{N_M})=\psi_0(S^M_{N_M})$ and $S^M_{N_M}<\overline{S}^M_{N_M}$, we get
\begin{align*}
\psi_1(S^M_{N_M})&=\frac{1}{\mu_1}\Big(S^1_{tot}-\lambda_1S^M_{N_M}-\frac{dS^M_{N_M}}{\psi_0(S^M_{N_M})}\Big)\\
&<\frac{1}{\mu_1}\Big(S^1_{tot}-\lambda_1\overline{S}^M_{N_M}-\frac{d\overline{S}^M_{N_M}}{\overline{\psi}_0(\overline{S}^M_{N_M})}\Big)=\overline{\psi}_1(\overline{S}^M_{N_M}),
\end{align*}
and from here it follows inductively using \eqref{NewEq:DefinitionOfPsi} as in the proof of \ref{Prop:AppendixLayerEfftects} that $\overline{S}^m_{N_m}=\overline{\psi}_m(\overline{S}^M_{N_M})<\psi_m(S^M_{N_M})=S^m_{N_m}$ for all $m<m_0$.
\end{proof}

\begin{prop}\label{Prop:AppendixResponseLastLayer}
$c_{M,\varepsilon}<c_{m,\varepsilon}$ for all $m< M$ and $0<\varepsilon< 1$.
\end{prop}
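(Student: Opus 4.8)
The plan is to reduce the claimed inequality to a scaling (super-linearity) property of the functions $\psi_m$ and then establish that property by induction on $m$, descending from $m=M-1$ to $m=1$. First I would unwind the definition $c_{m,\varepsilon}=\psi_0\big(\psi_m^{-1}(\varepsilon\psi_m(\rho_M))\big)$ together with $\psi_M=\operatorname{id}$, which gives $c_{M,\varepsilon}=\psi_0(\varepsilon\rho_M)$. Since $\psi_0$ is strictly increasing on $[0,\xi_0)$ by Proposition \ref{NewProp:PsimSS}, the inequality $c_{M,\varepsilon}<c_{m,\varepsilon}$ is equivalent to $\varepsilon\rho_M<\psi_m^{-1}(\varepsilon\psi_m(\rho_M))$, and applying the increasing function $\psi_m$ to both sides turns this into
\begin{align*}
\psi_m(\varepsilon\rho_M)<\varepsilon\,\psi_m(\rho_M).
\end{align*}
Before proceeding I would record the domain bookkeeping that legitimizes these steps: because $\rho_M=\xi_0<\xi_m$ for every $1\le m<M$, both $\varepsilon\rho_M$ and $\rho_M$ lie in $[0,\xi_m)$, where $\psi_m$ is a continuous strictly increasing bijection onto $[0,\infty)$ (so $\psi_m^{-1}$ exists); moreover $\psi_m^{-1}(\varepsilon\psi_m(\rho_M))<\psi_m^{-1}(\psi_m(\rho_M))=\rho_M=\xi_0$, placing the argument fed to $\psi_0$ inside $[0,\xi_0)$ as required.

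The heart of the argument is then the scaling inequality $\psi_m(\varepsilon y)<\varepsilon\,\psi_m(y)$ for every $0<\varepsilon<1$ and every $0<y<\xi_m$, which I would prove by induction on $m$. For the base case $m=M-1$, the recursion \eqref{NewEq:DefinitionOfPsi} and $\psi_M=\operatorname{id}$ give $\psi_{M-1}(y)=Ky/\big(u_{M-1}(S^M_{tot}-(\lambda_M+\mu_M)y)\big)$ with $K=d(v_{M-1}/w_{M-1}+1)$, and since $\varepsilon y<y$ strictly increases the (positive) denominator, the inequality follows at once.

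For the inductive step I would write both $\psi_m(\varepsilon y)$ and $\varepsilon\,\psi_m(y)$ via \eqref{NewEq:DefinitionOfPsi} as fractions with the same positive numerator $K_m\varepsilon y$, where $K_m=d(v_m/w_m+1)$. The inequality then reduces to comparing the two denominators, i.e.\ to showing
\begin{align*}
\lambda_{m+1}(1-\varepsilon)y+\mu_{m+1}\big(\psi_{m+1}(y)-\psi_{m+1}(\varepsilon y)\big)>0.
\end{align*}
Both summands are manifestly positive: the first because $0<\varepsilon<1$, $y>0$, and $\lambda_{m+1}>0$, and the second because $\psi_{m+1}$ is strictly increasing and $\varepsilon y<y$.

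I expect the only real work to be bookkeeping rather than a genuine obstacle, namely tracking that $\varepsilon\rho_M$, $\rho_M$, and every argument fed into $\psi_{m+1}$ stay inside the intervals $[0,\xi_m)$, $[0,\xi_{m+1})$ on which Proposition \ref{NewProp:PsimSS} supplies positivity and monotonicity, and confirming the denominators remain positive throughout. It is worth remarking that the inductive hypothesis is never actually used beyond the monotonicity of $\psi_{m+1}$ already furnished by Proposition \ref{NewProp:PsimSS}; equivalently, the whole claim can be phrased as the single observation that $y\mapsto\psi_m(y)/y$ equals the positive constant $K_m$ divided by the \emph{decreasing} function $u_m\big(S^{m+1}_{tot}-\lambda_{m+1}y-\mu_{m+1}\psi_{m+1}(y)\big)$ and is therefore strictly increasing on $[0,\xi_m)$, which immediately yields $\psi_m(\varepsilon\rho_M)<\varepsilon\,\psi_m(\rho_M)$ and hence the proposition.
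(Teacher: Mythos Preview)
Your proof is correct and follows essentially the same route as the paper: reduce $c_{M,\varepsilon}<c_{m,\varepsilon}$ via monotonicity of $\psi_0$ and $\psi_m$ to the scaling inequality $\psi_m(\varepsilon\rho_M)<\varepsilon\,\psi_m(\rho_M)$, and then verify this by comparing the denominators in the recursion \eqref{NewEq:DefinitionOfPsi}. The paper phrases it as an induction starting from $m=M$ (with equality at the base), but---exactly as you observe---its inductive step only uses $\varepsilon\rho_M<\rho_M$ and the monotonicity of $\psi_m$, never the hypothesis $\psi_m(\varepsilon\rho_M)\le\varepsilon\psi_m(\rho_M)$; your remark that the induction is superfluous and that $y\mapsto\psi_m(y)/y$ is strictly increasing on $[0,\xi_m)$ is a clean way to say the same thing.
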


\begin{proof}
Using the definitions of $c_{M,\varepsilon}$ and $c_{m,\varepsilon}$, we obtain $c_{M,\varepsilon}=\psi_0(\varepsilon\rho_M)$ and $c_{m,\varepsilon}=\psi_0(\psi_m^{-1}(\varepsilon \psi_m(\rho_M)))$, and since both $\psi_m$ and $\psi_0^{-1}$ are increasing functions, $c_{M,\varepsilon}<c_{m,\varepsilon}$ if and only if $\psi_m(\varepsilon\rho_M)<\varepsilon \psi_m(\rho_M)$.

In the case $m=M$, we have $\psi_M(\varepsilon\rho_M)=\varepsilon\rho_M=\varepsilon\psi_M(\rho_M)$. Hence it suffices to prove that $\psi_m(\varepsilon\rho_M)\le \varepsilon\psi_m(\rho_M)$ implies $\psi_{m-1}(\varepsilon\rho_M)<\varepsilon\psi_{m-1}(\rho_M)$, and this follows using $\varepsilon\rho_M<\rho_M$, since
\begin{align*}
\psi_{m-1}(\varepsilon\rho_M)&=\frac{d\big(\frac{v_{m-1}}{w_{m-1}}+1\big)\varepsilon\rho_M}{u_{m-1}\big(S^m_{tot}-\lambda_m \varepsilon\rho_M-\mu_m \psi_m(\varepsilon\rho_M)\big)}\\
&<\varepsilon\frac{d\big(\frac{v_{m-1}}{w_{m-1}}+1\big)\rho_M}{u_{m-1}\big(S^m_{tot}-\lambda_m \rho_M-\mu_m \psi_m(\rho_M)\big)}=\varepsilon\psi_{m-1}(\rho_M),
\end{align*}
which finishes the proof.
\end{proof}

\begin{lemma}\label{Lemma:RhoLimit}
Let all reaction rates and all total amounts except $S^1_{tot}$ be fixed. Then $\rho_M\rightarrow\xi_1$  for $S^1_{tot}\rightarrow\infty$ and $\rho_M\rightarrow 0$ for $S^1_{tot}\rightarrow 0$.
\end{lemma}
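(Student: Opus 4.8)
The plan is to realize $\rho_M$ as the inverse image of $S^1_{tot}$ under a fixed, well-behaved monotone function, and then let $S^1_{tot}$ tend to each of its two extremes. Recall that $\rho_M=\xi_0$ is by definition the smallest positive zero of the denominator of $\psi_0$ in \eqref{Eq:ThePsi0Function}; writing $y=S^M_{N_M}$, that denominator is $S^1_{tot}-\lambda_1 y-\mu_1\psi_1(y)$. The first step is to put $g(y)=\lambda_1 y+\mu_1\psi_1(y)$, so that $\rho_M$ is characterized by $g(\rho_M)=S^1_{tot}$.

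The observation that makes the argument work is that the recursion \eqref{NewEq:DefinitionOfPsi} shows $\psi_1$ to depend only on reaction constants and total amounts in layers $2,\ldots,M$; in particular neither $\psi_1$ nor its minimal positive singularity $\xi_1$ changes as $S^1_{tot}$ varies, so $g$ itself is fixed throughout. The second step is to record, via Proposition \ref{NewProp:PsimSS}, the relevant properties of $g$ on $[0,\xi_1)$: since $\lambda_1,\mu_1>0$ and $\psi_1$ is continuous and strictly increasing with $\psi_1(0)=0$, the function $g$ is continuous, strictly increasing, and satisfies $g(0)=0$; moreover $\psi_1(y)\to\infty$ as $y\to\xi_1^-$ forces $g(y)\to\infty$ as $y\to\xi_1^-$. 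Hence $g$ is a continuous, strictly increasing bijection of $[0,\xi_1)$ onto $[0,\infty)$, with a continuous strictly increasing inverse $g^{-1}$; note this is consistent with $\rho_M=\xi_0\in(0,\xi_1)$ coming from the chain $\xi_0<\xi_1$.

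The third step is to pass to the limits. Since $\rho_M=g^{-1}(S^1_{tot})$ and $g^{-1}$ is continuous at $0$ with $g^{-1}(0)=0$, we get $\rho_M\to 0$ as $S^1_{tot}\to 0^+$. For $S^1_{tot}\to\infty$, monotonicity of $g^{-1}$ shows that $\rho_M=g^{-1}(S^1_{tot})$ increases and is bounded above by $\xi_1$, hence converges to some $L\le\xi_1$; were $L<\xi_1$, then $g(L)<\infty$ and $S^1_{tot}=g(\rho_M)\le g(L)$ for all $S^1_{tot}$, contradicting $S^1_{tot}\to\infty$. Therefore $L=\xi_1$ and $\rho_M\to\xi_1$.

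There is no serious obstacle here. The only point requiring care is the bookkeeping that justifies treating $g$ as independent of $S^1_{tot}$, namely confirming from \eqref{NewEq:DefinitionOfPsi} that $\psi_1$ involves only the layers above the first, together with a clean invocation of the divergence $\psi_1(y)\to\infty$ at $\xi_1$ from Proposition \ref{NewProp:PsimSS} to guarantee that the range of $g$ is all of $[0,\infty)$. Everything else is elementary monotonicity and continuity.
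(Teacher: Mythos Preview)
Your proof is correct and follows essentially the same approach as the paper: both recognize that $\rho_M$ is determined by $S^1_{tot}=\lambda_1\rho_M+\mu_1\psi_1(\rho_M)$, observe that the right-hand side is a continuous, strictly increasing bijection from $[0,\xi_1)$ onto $[0,\infty)$ (independent of $S^1_{tot}$), and read off the two limits from the inverse. Your version is slightly more explicit about why $\psi_1$ and $\xi_1$ are fixed as $S^1_{tot}$ varies and about the argument for the limit at infinity, but the underlying idea is identical.
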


\begin{proof}
By definition, $\rho_M$ is a root in the denominator of \eqref{Eq:ThePsi0Function}, and hence $S^1_{tot}=\lambda_1\rho_M+\mu_1\psi_1(\rho_M)$. This function is continuous and increasing on $[0,\xi_1)$, equals $0$ for $\rho_M=0$, and tends to infinity when $\rho_M\rightarrow\xi_1^-$.  The function is thus invertible, say $\rho_M=\phi(S^1_{tot})$, such that $\phi$ is continuous and increasing with $\phi(0)=0$ and satisfies $\phi(S^1_{tot})\rightarrow\xi_1$ for $S^1_{tot}\rightarrow\infty$.
\end{proof}

\begin{lemma}\label{Lemma:DifferentialLemma}
For all $m<M$, the functions $\psi_{m}$ and $\psi_{m}^{-1}$ satisfy
\begin{align*}
\psi_{m}'(0)=d\Big(\frac{v_m}{w_m}+1\Big)\Big/u_mS^{m+1}_{tot}\quad\text{and}\quad (\psi_{m}^{-1})'(0)=u_mS^{m+1}_{tot}\Big/d\Big(\frac{v_m}{w_m}+1\Big).
\end{align*}
\end{lemma}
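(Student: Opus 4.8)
The plan is to read off $\psi_m'(0)$ directly from the recursive definition \eqref{NewEq:DefinitionOfPsi} and then deduce the derivative of the inverse from the inverse function theorem. First I would shift the index in \eqref{NewEq:DefinitionOfPsi} so that $\psi_m$ sits on the left-hand side, writing for $m<M$
\begin{align*}
\psi_m(y)=\frac{d\big(\frac{v_m}{w_m}+1\big)y}{u_m\big(S^{m+1}_{tot}-\lambda_{m+1}y-\mu_{m+1}\psi_{m+1}(y)\big)}=\frac{N(y)}{D(y)}.
\end{align*}
The first thing to establish is that $\psi_m$ is differentiable at $0$: by Proposition \ref{NewProp:PsimSS} we have $\psi_{m+1}(0)=0$, so $D(0)=u_mS^{m+1}_{tot}$, which is strictly positive because all reaction constants and total amounts are positive. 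Thus $\psi_m$ is a rational function whose denominator is nonzero at the origin, and $N(0)=0$ recovers $\psi_m(0)=0$.

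The main step is to differentiate the quotient at $y=0$. The key simplification — and the reason the statement comes out so clean — is that since $N(0)=0$, the quotient rule collapses to $\psi_m'(0)=N'(0)/D(0)$, so the term carrying $D'(0)$, and hence $\psi_{m+1}'(0)$, disappears. In particular no induction on $m$ is required: the answer depends on $\psi_{m+1}$ only through the value $\psi_{m+1}(0)=0$. Since $N(y)=d\big(\frac{v_m}{w_m}+1\big)y$ is linear, $N'(0)=d\big(\frac{v_m}{w_m}+1\big)$, and dividing by $D(0)=u_mS^{m+1}_{tot}$ gives the first claimed formula.

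Finally, for the inverse I would invoke Proposition \ref{NewProp:PsimSS} again: $\psi_m$ is continuous and strictly increasing near $0$, so $\psi_m^{-1}$ exists there with $\psi_m^{-1}(0)=0$, and $\psi_m'(0)$ is nonzero (a ratio of strictly positive quantities), whence the inverse function theorem yields $(\psi_m^{-1})'(0)=1/\psi_m'(0)$, which is precisely the second formula. I do not anticipate any genuine obstacle here; the only point demanding attention is evaluating $D(0)$ through $\psi_{m+1}(0)=0$ and observing that the vanishing of the numerator at the origin renders $\psi_{m+1}'(0)$ irrelevant, which is exactly what keeps the computation from requiring the full recursion.
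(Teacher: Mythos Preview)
Your proposal is correct and follows essentially the same route as the paper's proof: differentiate the recursive definition of $\psi_m$, evaluate at $y=0$ using $\psi_{m+1}(0)=0$, and then apply the inverse function theorem for the second identity. Your write-up is simply more explicit about why the $D'(0)$ term (and hence any dependence on $\psi_{m+1}'(0)$) drops out, which the paper leaves implicit.
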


\begin{proof}
The first statement follows by differentiating the recursive expression for $\psi_m(y)$ \eqref{NewEq:DefinitionOfPsi} and inserting $y=0$, and then the second statement immediately follows from $(\psi_{m}^{-1})'(0)=1/\psi_{m}'(\psi_{m}^{-1}(0))=1/\psi_{m}'(0)$.
\end{proof}

\begin{lemma}\label{Lemma:SecondDifferentialLemma}
As a function of $S^1_{tot}$, the maximal response $\rho_M$ satisfies
\begin{align}
\frac{d\rho_M}{dS^1_{tot}}(0)=\frac{u_1S^2_{tot}}{\lambda_1 u_1S^2_{tot} + \mu_1 d \Big(\frac{v_1}{w_1}+1\Big)}
\end{align}
\end{lemma}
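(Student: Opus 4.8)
The plan is to treat the defining relation for $\rho_M$ from Lemma \ref{Lemma:RhoLimit} as an implicit equation and differentiate it. Since $\rho_M$ is a root of the denominator of $\psi_0$ in \eqref{Eq:ThePsi0Function}, we have the identity $S^1_{tot}=\lambda_1\rho_M+\mu_1\psi_1(\rho_M)$, which expresses $S^1_{tot}$ as an explicit function $g(\rho_M)=\lambda_1\rho_M+\mu_1\psi_1(\rho_M)$ of the maximal response. The proof of Lemma \ref{Lemma:RhoLimit} already established that $g$ is continuous, strictly increasing, and invertible on $[0,\xi_1)$ with inverse $\rho_M=\phi(S^1_{tot})$, so $\rho_M$ is a genuine (differentiable) function of $S^1_{tot}$ in a neighborhood of $0$.

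First I would invoke the inverse function theorem. Differentiating the identity $S^1_{tot}=\lambda_1\rho_M+\mu_1\psi_1(\rho_M)$ with respect to $S^1_{tot}$ yields
\begin{align*}
1=\Big(\lambda_1+\mu_1\psi_1'(\rho_M)\Big)\frac{d\rho_M}{dS^1_{tot}},
\end{align*}
so that $\dfrac{d\rho_M}{dS^1_{tot}}=\dfrac{1}{\lambda_1+\mu_1\psi_1'(\rho_M)}$. The denominator is positive because $\lambda_1>0$ and $\psi_1$ is increasing by Proposition \ref{NewProp:PsimSS}, so this is well-defined.

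Next I would evaluate at $S^1_{tot}=0$. By Lemma \ref{Lemma:RhoLimit}, $\rho_M\rightarrow 0$ as $S^1_{tot}\rightarrow 0$, so I only need the value $\psi_1'(0)$. This is supplied directly by Lemma \ref{Lemma:DifferentialLemma} with $m=1$, namely $\psi_1'(0)=d\big(\tfrac{v_1}{w_1}+1\big)\big/\,u_1S^2_{tot}$. Substituting this and clearing the compound fraction by multiplying numerator and denominator by $u_1S^2_{tot}$ gives exactly the claimed formula
\begin{align*}
\frac{d\rho_M}{dS^1_{tot}}(0)=\frac{u_1S^2_{tot}}{\lambda_1 u_1S^2_{tot}+\mu_1 d\big(\tfrac{v_1}{w_1}+1\big)}.
\end{align*}

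There is no serious obstacle here, since the result is an immediate corollary of the two preceding lemmas combined with implicit differentiation; the only point requiring a word of care is the justification that $\rho_M$ is differentiable at $S^1_{tot}=0$, which is guaranteed by the strict monotonicity and smoothness of $g$ (hence of $\phi$) on $[0,\xi_1)$ established in Lemma \ref{Lemma:RhoLimit}. The rest is routine algebraic simplification.
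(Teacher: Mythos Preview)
Your proof is correct and follows essentially the same approach as the paper: differentiate the implicit relation $S^1_{tot}=\lambda_1\rho_M+\mu_1\psi_1(\rho_M)$ with respect to $S^1_{tot}$, evaluate at $S^1_{tot}=0$ using $\rho_M(0)=0$, and substitute the value of $\psi_1'(0)$ from Lemma~\ref{Lemma:DifferentialLemma}. If anything, your version is slightly more careful in justifying the differentiability of $\rho_M$ via the inverse function theorem, which the paper leaves implicit.
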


\begin{proof}
By differentiating $S^1_{tot}=\lambda_1\rho_M+\mu_1\psi_1(\rho_M)$ with respect to $S^1_{tot}$, we get $\rho_M'(0)=(1-\mu_1\psi_1'(\rho_M(0))\rho_M'(0))/\lambda_1=(1-\mu_1\psi_1'(0)\rho_M'(0))/\lambda_1$, and by isolating $\rho_M'(0)$ and inserting \ref{Lemma:DifferentialLemma}, we obtain the desired result.
\end{proof}

By plugging the expression \eqref{Eq:AllMaximalResponses} into the definition of the response coefficient \eqref{Eq:ResponseCoefficient}, it follows that we have
\begin{align*}
\chi_{m,\varepsilon,\delta}=\underbrace{\frac{\psi_m^{-1}(\varepsilon\psi_m(\rho_M))}{\psi_m^{-1}(\delta\psi_m(\rho_M))}}_{\alpha_m}
\cdot
\underbrace{\frac{S^1_{tot}-\lambda_1\psi_m^{-1}(\delta\psi_m(\rho_M))-\mu_1\psi_1(\psi_m^{-1}(\delta\psi_m(\rho_M)))}{S^1_{tot}-\lambda_1\psi_m^{-1}(\varepsilon\psi_m(\rho_M))-\mu_1\psi_1(\psi_m^{-1}(\varepsilon\psi_m(\rho_M)))}}_{\beta_m},
\end{align*}
and below we will consider the factors $\alpha_m$ and $\beta_m$ separately.

\begin{prop}\label{Prop:AlphaProp}
Let all reaction rates and all total amounts except $S^1_{tot}$ be fixed. Then $\alpha_M=\varepsilon/\delta$, and
\begin{align*}
\alpha_m&\rightarrow
\frac{\varepsilon}{\delta}
\qquad\text{for}\qquad
S^1_{tot}\rightarrow 0.\\
\alpha_m&\rightarrow
\begin{cases}
1&\text{if }m=1\\
\frac{\psi_m^{-1}(\varepsilon\psi_m(\xi_1))}{\psi_m^{-1}(\delta\psi_m(\xi_1))}&\text{if }1<m<M
\end{cases}
\qquad\text{for}\qquad
S^1_{tot}\rightarrow\infty,
\end{align*}
and $(\psi_m^{-1}(\varepsilon\psi_m(\xi_1)))/(\psi_m^{-1}(\delta\psi_m(\xi_1)))>\varepsilon/\delta$.
\end{prop}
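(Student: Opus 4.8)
The plan is to dispatch the exact value and the two limits by continuity and direct computation, concentrating the real work on the concluding strict inequality. The value $\alpha_M=\varepsilon/\delta$ is immediate, since $\psi_M=\operatorname{id}$ gives $\alpha_M=(\varepsilon\rho_M)/(\delta\rho_M)$. For $S^1_{tot}\to 0$ I would invoke Lemma \ref{Lemma:RhoLimit} to get $\rho_M\to 0$, so that $u:=\psi_m(\rho_M)\to 0^+$ by Proposition \ref{NewProp:PsimSS}; writing
\[
\alpha_m=\frac{\psi_m^{-1}(\varepsilon u)/(\varepsilon u)}{\psi_m^{-1}(\delta u)/(\delta u)}\cdot\frac{\varepsilon}{\delta}
\]
and using $\psi_m^{-1}(0)=0$ together with $(\psi_m^{-1})'(0)>0$ from Lemma \ref{Lemma:DifferentialLemma}, both quotients in the leading factor tend to $(\psi_m^{-1})'(0)$, so that factor tends to $1$ and $\alpha_m\to\varepsilon/\delta$ (the case $m=M$ being trivially $\varepsilon/\delta$).

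For $S^1_{tot}\to\infty$, Lemma \ref{Lemma:RhoLimit} gives $\rho_M\to\xi_1$. When $m=1$, Proposition \ref{NewProp:PsimSS} yields $\psi_1(\rho_M)\to\infty$; since $\psi_1$ maps $[0,\xi_1)$ increasingly onto $[0,\infty)$, its inverse satisfies $\psi_1^{-1}(x)\to\xi_1$ as $x\to\infty$, so both the numerator and the denominator of $\alpha_1$ tend to $\xi_1$ and $\alpha_1\to 1$. When $1<m<M$, the ordering $\xi_1<\xi_m$ from Proposition \ref{NewProp:PsimSS} makes $\psi_m$ continuous at $\xi_1$ with $\psi_m(\xi_1)$ finite, so continuity of $\psi_m$ and of $\psi_m^{-1}$ delivers the stated limit $\psi_m^{-1}(\varepsilon\psi_m(\xi_1))/\psi_m^{-1}(\delta\psi_m(\xi_1))$ directly.

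The main obstacle is the final inequality, and the key I would exploit is that $\psi_m(y)/y$ is strictly increasing for $1\le m<M$. Dividing the recursion \eqref{NewEq:DefinitionOfPsi} by $y$ gives
\[
\frac{\psi_m(y)}{y}=\frac{d\big(v_m/w_m+1\big)}{u_m\big(S^{m+1}_{tot}-\lambda_{m+1}y-\mu_{m+1}\psi_{m+1}(y)\big)},
\]
whose denominator is strictly decreasing in $y$, because $\psi_{m+1}$ is increasing and $\lambda_{m+1},\mu_{m+1}>0$; hence the reciprocal, and therefore $\psi_m(y)/y$, is strictly increasing wherever the denominator is positive. Substituting $s=\psi_m(y)$, so that $\psi_m^{-1}(s)/s=y/\psi_m(y)$ is the reciprocal of $\psi_m(y)/y$, shows equivalently that $\psi_m^{-1}(s)/s$ is strictly decreasing in $s$. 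Applying this at $s=\varepsilon\psi_m(\xi_1)<\delta\psi_m(\xi_1)$ and cancelling the common factor $\psi_m(\xi_1)$ then yields $\psi_m^{-1}(\varepsilon\psi_m(\xi_1))/\psi_m^{-1}(\delta\psi_m(\xi_1))>\varepsilon/\delta$. I expect the only bookkeeping to be checking that these inverses are evaluated within the range $[0,\infty)$ of $\psi_m$ on $[0,\xi_m)$, which holds since $\varepsilon\psi_m(\xi_1),\delta\psi_m(\xi_1)<\psi_m(\xi_1)<\infty$.
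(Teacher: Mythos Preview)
Your argument is correct and largely parallels the paper's, but it diverges in two places worth noting.

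For the limit $S^1_{tot}\to 0$, the paper applies L'H\^opital's rule with respect to $S^1_{tot}$, which forces it to compute $d\rho_M/dS^1_{tot}(0)$ via Lemma~\ref{Lemma:SecondDifferentialLemma} in addition to Lemma~\ref{Lemma:DifferentialLemma}. Your rewriting
\[
\alpha_m=\frac{\psi_m^{-1}(\varepsilon u)/(\varepsilon u)}{\psi_m^{-1}(\delta u)/(\delta u)}\cdot\frac{\varepsilon}{\delta},
\qquad u=\psi_m(\rho_M)\to 0^+,
\]
is the same limit in disguise, but by passing to the variable $u$ you only need $(\psi_m^{-1})'(0)\neq 0$ from Lemma~\ref{Lemma:DifferentialLemma} and bypass Lemma~\ref{Lemma:SecondDifferentialLemma} entirely. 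This is a cleaner route.

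For the strict inequality, the paper asserts that $\psi_m$ is convex (hence $\psi_m^{-1}$ concave) and then uses $\psi_m^{-1}(x)/x$ decreasing; the convexity itself is stated without proof. Your argument sidesteps convexity altogether: from the recursion \eqref{NewEq:DefinitionOfPsi} you read off that $\psi_m(y)/y$ has a strictly decreasing denominator, hence is strictly increasing, and this already gives $\psi_m^{-1}(s)/s$ strictly decreasing. This is the weaker property actually needed, and you obtain it directly from what Proposition~\ref{NewProp:PsimSS} already established (monotonicity of $\psi_{m+1}$), so your proof is more self-contained here. The paper's convexity claim is true and can be verified by induction, but your route avoids that extra step.
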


\begin{proof}
Since $\psi_M=\operatorname{id}$, it immediately follows that $\alpha_M=(\varepsilon\rho_M)/(\delta\rho_M)=\varepsilon/\delta$ as claimed. According to \ref{Lemma:RhoLimit}, we have $\rho\rightarrow 0$ for $S^1_{tot}\rightarrow 0$, so since $\psi_m(0)=0$, it follows that $\alpha_m$ is a $0/0$--expression in the limit $S^1_{tot}\rightarrow 0$, and we may apply L'H\^opital's rule. For the numerator of $\alpha_m$ (and analogously for the denominator) we have
\begin{align*}
\frac{d}{dS^1_{tot}}\Big(\psi_m^{-1}(\varepsilon \rho_m)\Big)
=\frac{d\psi_m^{-1}}{dx}(\varepsilon \rho_m)\varepsilon\frac{d\rho_m}{dS^1_{tot}},
\end{align*}
and using that the results of \ref{Lemma:DifferentialLemma} and \ref{Lemma:SecondDifferentialLemma} are non-zero, it follows that
\begin{align*}
\frac{d}{dS^1_{tot}}\Big(\psi_m^{-1}(\varepsilon \rho_m)\Big)\Big/\frac{d}{dS^1_{tot}}\Big(\psi_m^{-1}(\delta \rho_m)\Big)
\rightarrow\frac{\varepsilon}{\delta}
\qquad\text{for}\qquad S^1_{tot}\rightarrow 0.
\end{align*}

For the limit $S^1_{tot}\rightarrow\infty$, we consider first the case $m=1$. According to \ref{Lemma:RhoLimit} we have $\rho_1=\psi_1(\rho_M)\rightarrow\infty$ for $S^1_{tot}\rightarrow\infty$, and hence $\alpha_1=\psi_1^{-1}(\varepsilon\rho_1)/\psi_1^{-1}(\delta\rho_1)\rightarrow\xi_1/\xi_1=1$,  The remaining cases follow from the first part of \ref{Lemma:RhoLimit}. Finally, since $\psi_m$ is convex and increasing, the inverse $\psi_m^{-1}$ is concave and increasing. Using $\psi_m^{-1}(0)=0$, it follows that $\psi_m^{-1}(x)/x>\psi_m^{-1}(y)/y$ for all $x<y$. In particular, this holds for $x=\varepsilon\rho_m$ and $y=\delta\rho_m$, and hence $(\psi_m^{-1}(\varepsilon\psi_m(\xi_1)))/(\psi_m^{-1}(\delta\psi_m(\xi_1)))>\varepsilon/\delta$.
\end{proof}

\begin{prop}\label{Prop:BetaProp}
Let all reaction rates and all total amounts except $S^1_{tot}$ be fixed. Then
\begin{align*}
\beta_m&\rightarrow
\frac{1-\delta}{1-\varepsilon}
\qquad\text{for}\qquad
S^1_{tot}\rightarrow 0\\
\beta_m&\rightarrow
\begin{cases}
\frac{1-\delta}{1-\varepsilon}&\text{if }m=1\\
1&\text{if }1<m\le M
\end{cases}
\qquad\text{for}\qquad
S^1_{tot}\rightarrow\infty.
\end{align*}
\end{prop}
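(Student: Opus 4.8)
The plan is to eliminate the explicit occurrence of $S^1_{tot}$ in $\beta_m$ by exploiting the defining property of the maximal response. By Proposition \ref{NewProp:PsimSS}, $\rho_M=\xi_0$ is the smallest positive zero of the denominator of $\psi_0$ in \eqref{Eq:ThePsi0Function}, so $S^1_{tot}=\lambda_1\rho_M+\mu_1\psi_1(\rho_M)$. Writing $x_\delta=\psi_m^{-1}(\delta\psi_m(\rho_M))$ and $x_\varepsilon=\psi_m^{-1}(\varepsilon\psi_m(\rho_M))$, so that $\psi_m(x_\delta)=\delta\psi_m(\rho_M)$ and $\psi_m(x_\varepsilon)=\varepsilon\psi_m(\rho_M)$, I would substitute this identity into the numerator and denominator of $\beta_m$, turning each into a difference,
\begin{align*}
\beta_m=\frac{\lambda_1(\rho_M-x_\delta)+\mu_1(\psi_1(\rho_M)-\psi_1(x_\delta))}{\lambda_1(\rho_M-x_\varepsilon)+\mu_1(\psi_1(\rho_M)-\psi_1(x_\varepsilon))}.
\end{align*}
This rewritten form is the workhorse for both limits.

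For $S^1_{tot}\to 0$, Lemma \ref{Lemma:RhoLimit} gives $\rho_M\to 0$, and hence $x_\delta,x_\varepsilon\to 0$ by continuity. Dividing numerator and denominator by $\rho_M$ and linearising near $0$ with the derivatives from Lemma \ref{Lemma:DifferentialLemma} (equivalently, applying L'H\^opital's rule as in the proof of Proposition \ref{Prop:AlphaProp}), one obtains $x_\delta/\rho_M\to\delta$, $x_\varepsilon/\rho_M\to\varepsilon$, and $(\psi_1(\rho_M)-\psi_1(x_\delta))/\rho_M\to(1-\delta)\psi_1'(0)$, so the numerator tends to $(1-\delta)(\lambda_1+\mu_1\psi_1'(0))$ and the denominator to $(1-\varepsilon)(\lambda_1+\mu_1\psi_1'(0))$. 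The common factor cancels and $\beta_m\to(1-\delta)/(1-\varepsilon)$, which holds for every $m$.

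For $S^1_{tot}\to\infty$, Lemma \ref{Lemma:RhoLimit} gives $\rho_M\to\xi_1$, and the behaviour now splits according to whether $m=1$ or $m>1$. When $m=1$ one has $\psi_1(x_\delta)=\delta\psi_1(\rho_M)$ exactly, hence $\psi_1(\rho_M)-\psi_1(x_\delta)=(1-\delta)\psi_1(\rho_M)$; since $\psi_1(\rho_M)\to\infty$ as $\rho_M\to\xi_1^-$ while $\rho_M-x_\delta$ stays bounded (both $\rho_M$ and $x_\delta$ tend to $\xi_1$), the numerator is asymptotically $\mu_1(1-\delta)\psi_1(\rho_M)$ and the denominator $\mu_1(1-\varepsilon)\psi_1(\rho_M)$, giving $\beta_1\to(1-\delta)/(1-\varepsilon)$. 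When $m>1$, Proposition \ref{NewProp:PsimSS} gives $\xi_1<\xi_m$, so $\psi_m$ is finite and continuous at $\xi_1$; therefore $x_\delta\to\psi_m^{-1}(\delta\psi_m(\xi_1))<\xi_1$ and likewise $x_\varepsilon$, so $\psi_1(x_\delta)$ and $\psi_1(x_\varepsilon)$ remain bounded while $\psi_1(\rho_M)\to\infty$. Both numerator and denominator are then dominated by the same divergent term $\mu_1\psi_1(\rho_M)$, so $\beta_m\to 1$.

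I expect the bookkeeping in the $S^1_{tot}\to\infty$ case to be the main obstacle: one must carefully separate the divergent quantities from the bounded ones, and the entire distinction between the two sub-cases rests on whether $x_\delta,x_\varepsilon$ approach $\xi_1$ (so that $\psi_1(x_\delta)$ diverges at the controlled rate $\delta\psi_1(\rho_M)$, as when $m=1$) or approach values strictly below $\xi_1$ (so that $\psi_1(x_\delta)$ stays bounded, as when $m>1$). This is governed precisely by whether $\psi_m$ is singular or finite at $\xi_1$, i.e.\ by the ordering $\xi_1<\xi_2<\cdots<\xi_{M-1}$ established in Proposition \ref{NewProp:PsimSS}.
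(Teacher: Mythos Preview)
Your proof is correct and follows essentially the same route as the paper; the only cosmetic difference is that you substitute $S^1_{tot}=\lambda_1\rho_M+\mu_1\psi_1(\rho_M)$ uniformly at the outset, whereas the paper makes this substitution only for the case $m=1$, $S^1_{tot}\to\infty$, and otherwise keeps $S^1_{tot}$ explicit (applying L'H\^opital in $S^1_{tot}$ for the small-$S^1_{tot}$ limit, and for $m>1$ simply letting the divergent $S^1_{tot}$ dominate the bounded subtracted terms). The underlying computations and the lemmas invoked are identical.
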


\begin{proof}
To simplify notation, we denote by $f_{m,\delta}$ and $f_{m,\varepsilon}$ the numerator and denominator, respectively, of $\beta_m$.
Note that $\beta_m$ is a $0/0$--expression in the limit $S^1_{tot}\rightarrow 0$, and hence we may use L'H\^opital's rule. Let $z_{m,\delta}=\psi_m^{-1}(\delta \rho_m)$ such that
$f_{m,\delta}=S^1_{tot}-\lambda_1 z_{m,\delta} - \mu_1 \psi_1(z_{m,\delta})$. Then
\begin{align*}
\frac{df_{m,\delta}}{dS^1_{tot}}=1-\lambda_1\frac{dz_{m,\delta}}{dS^1_{tot}}-\mu_1\frac{d\psi_1}{dy}(z_m)\frac{dz_{m,\delta}}{dS^1_{tot}}
=1-\frac{dz_{m,\delta}}{dS^1_{tot}}\Big(\lambda_1+\mu_1\frac{d\psi_1}{dy}(z_{m,\delta})\Big).
\end{align*}
Now note that
\begin{align*}
\frac{dz_{m,\delta}}{dS^1_{tot}}=\delta \frac{d\psi_m^{-1}}{dx}(\delta\psi_m(\rho_M)) \frac{d\psi_m}{dy}(\rho_M)\frac{d\rho_M}{dS^1_{tot}},
\end{align*}
and by plugging in the results from \ref{Lemma:DifferentialLemma} and \ref{Lemma:SecondDifferentialLemma}, it follows that $df_{m,\delta}/dS^1_{tot}\rightarrow 1-\delta$ for $S^1_{tot}\rightarrow 0$. Similarly, $df_{m,\varepsilon}/dS^1_{tot}\rightarrow 1-\varepsilon$, and hence L'H\^opital's rule implies that $\beta_m\rightarrow (1-\delta)/(1-\varepsilon)$ for $S^1_{tot}\rightarrow 0$.

We have $f_{1,\delta}=S^1_{tot}-\lambda_1\psi_1^{-1}(\delta \rho_1)-\mu_1\delta \rho_1$, and by combining this with $S^1_{tot}=\lambda_1\rho_M+\mu_1\rho_1$, it follows that $f_{1,\delta}=\lambda_1(\rho_M-\psi_1^{-1}(\delta \rho_1))+\mu_1(1-\delta)\rho_1$ and hence
\begin{align*}
\beta_1=
\frac{\lambda_1(\rho_M-\psi_1^{-1}(\delta \rho_1))+\mu_1(1-\delta)\rho_1}{\lambda_1(\rho_M-\psi_1^{-1}(\varepsilon \rho_1))+\mu_1(1-\varepsilon)\rho_1}
\rightarrow \frac{1-\delta}{1-\varepsilon}\qquad\text{for}\qquad S^1_{tot}\rightarrow\infty,
\end{align*}
since $\rho_M$ and $\psi_1^{-1}$ are bounded and $\rho_1\rightarrow\infty$ for $S^1_{tot}\rightarrow\infty$. For $m>1$, we have $\psi_m(\rho_M)\rightarrow\psi_m(\xi_1)$ for $S^1_{tot}\rightarrow\infty$, and it immediately follows that
\begin{align*}
\beta_m=\frac{S^1_{tot}-\lambda_1\psi_m^{-1}(\delta\psi_m(\rho_M))-\mu_1\psi_1(\psi_m^{-1}(\delta\psi_m(\rho_M)))}{S^1_{tot}-\lambda_1\psi_m^{-1}(\varepsilon\psi_m(\rho_M))-\mu_1\psi_1(\psi_m^{-1}(\varepsilon\psi_m(\rho_M)))}\rightarrow 1\quad\text{for}\quad S^1_{tot}\rightarrow\infty,
\end{align*}
since the last two terms in both numerator and denominator are bounded.
\end{proof}

By combining \ref{Prop:AlphaProp} and \ref{Prop:BetaProp}, we obtain Proposition \ref{Prop:ResponseLimits} from the main text.

\section{Example}
\label{Appendix:Example}

Here we provide details about the calculations left out in the example in Section \ref{Sec:StimulusResponse}. Using the definitions of $\psi_1$ and $\psi_2$, we may write $\Delta(\varepsilon)$ on the form
\begin{align*}
\Delta(\varepsilon)=\frac{r_3\varepsilon^3+r_2\varepsilon^2+r_1\varepsilon}{t_2\varepsilon^2+t_1\varepsilon+t_0}=\frac{R(\varepsilon)}{T(\varepsilon)},
\end{align*}
for example using computer software capable of symbolic manipulation, where the coefficients $r_i$ and $t_i$ only depend on the reaction constants, the total amounts, and the maximal response $\rho_3$. The coefficients of the polynomials may be chosen such that $t_2<0$ and both $T(0)$ and $T(1)$ are positive. Hence $T$ is a second degree polynomial with negative leading coefficient, and it assumes positive values in the end points $\varepsilon=0$ and $\varepsilon=1$. Therefore $T(\varepsilon)>0$ for all $0<\varepsilon<1$, and hence $\sign\Delta(\varepsilon)=\sign R(\varepsilon)$.

Since $\Delta(0)=\Delta(1)=0$, it follows that $R(0)=R(1)=0$, and by factoring these trivial roots we obtain the last root as stated in the main text. For some constant $\tau<0$, the polynomial $R(\varepsilon)$ factors $R(\varepsilon)=\tau\cdot\varepsilon(\varepsilon-1)(\varepsilon-\varepsilon^*)=\tau\varepsilon^3-\tau(1+\varepsilon^*)\varepsilon^2+\tau\varepsilon^*\varepsilon$, and hence $\sign R'(0) = -\sign(\varepsilon^*)$. Summing up, we know all three roots of $R(\varepsilon)$ as well as the slope of $R(\varepsilon)$ at $\varepsilon=0$, which completely determines the sign of $R(\varepsilon)$ at any point.

\section{Convergence and stability of the steady-state}
\label{Appendix:Stability}

We prove the convergence to a unique BMSS of the phosphorelay for any set of positive initial conditions using Theorem 2 in \cite{Angeli:2010ff}. We will restate the theorem here, but first we introduce some concepts from \cite{Angeli:2010ff}. To simplify notation, we use $\leftrightarrow$ to denote reversible reactions in inline text.

For a reaction $A\rightarrow B$, $A$ is the \emph{reactant} and $B$ is the \emph{product}, and for every reversible reaction, e.g.~$S_n^{m} \leftrightarrow S_{n+1}^m$, a direction is chosen so that the reactant and the product are well-defined. For the reversible reactions in \eqref{Reactions:IntraSubstrate}, \eqref{Reactions:InterSubstrate}, and \eqref{Reactions:Special}, we choose the left-hand side to be the reactant and the right-hand side to be the product.
We have a total of $n_S=\sum_{m=1}^M N_m+M-1$ species and $n_R=\sum_{m=1}^M N_m+M$ reactions in the system. Define the $n_S\times n_R$ \emph{stoichiometric matrix} $\Gamma$ 
such that the entry $\Gamma_{s,r}$ corresponding to species $s$ and reaction $r$ is $1$ if $s$ is in the reactant of $r$, $-1$ if $s$ is in product, and zero otherwise. Here orders on the sets of species and reactions are implicitly chosen. 

The \emph{directed SR-graph} is constructed as follows: The set of vertices is the union of the set of species (called \emph{species nodes}) and the set of reactions (called \emph{reaction nodes}). If a species $s$ takes part in a reversible reaction $r$ or is part of the reactant of an irreversible reaction $r$, there are edges $s\rightarrow r$ and $r\rightarrow s$. If $s$ is part of the product of an irreversible reaction $r$, there is an edge $r\rightarrow s$. A \emph{siphon} $\Sigma$ is a non-empty subset of species such that if $s\in\Sigma$ is in the product of a reaction $r$, then $\Sigma$ contains at least one species in the reactant of $r$. 
Here reversible reactions are considered as two different irreversible reactions, so that each side of the reversible reaction appears as product in one reaction and as reactant in the other. A siphon is \emph{minimal} if it contains no siphon other than itself. 

Theorem 2 in \cite{Angeli:2010ff} states that all solutions  of the phosphorelay ODEs  in  $\R^{n_S}_{>0}$ converge to a unique equilibrium if the following four conditions hold:
\begin{enumerate}[(i)]
\item The system of  ODEs of the phosphorelay is persistent. 
\item For all species $s$ and reactions $r_1\neq r_2$, the product $-\Gamma_{s,r_1}\Gamma_{s,r_2}$ is non-negative. 
\item There is a directed path between any two reaction nodes in the directed SR-graph.
\item The kernel of $\Gamma$ contains a positive vector.
\end{enumerate}

\emph{Remark.} With the notions from \cite{Angeli:2010ff}, conditions (ii) and (iii) imply that the corresponding system in reaction coordinates is monotone with respect to the positive orthant cone, and strongly monotone in the interior with respect to that order. This is a consequence of Proposition 5.3 and the proof of Theorem 1 in \cite{Angeli:2010ff}.

We will now prove (i)--(iv). By the choice of directions of reactions in our system, each species is on the left of exactly one reaction and on the right of exactly one reaction. It follows that for each species $s$ there are exactly two reactions $r_1,r_2$ such that $\Gamma_{s,r_1}, \Gamma_{s,r_2}\neq 0$ and further that they have opposite sign. Thus, $\Gamma_{s,r_1}\Gamma_{s,r_2}<0$ and zero for all other choices of reactions. This proves (ii). 
Each row of $\Gamma$ has only two non-zero entries, and they are of opposite signs. Therefore the vector $(1,\dots,1)$ belongs to the kernel of $\Gamma$, and hence (iv) holds.

To show (iii), note that for a fixed $m$, there is a path in the directed SR-graph between any two reaction nodes of the form
\begin{align*}
r_{n,m}\colon S_n^{m}\leftrightarrow S_{n+1}^m,\:\: e_m\colon S_{N_m}^{m}+S_0^{m+1}\leftrightarrow X^m,\:\:\text{or}\:\: d_m\colon X^m\rightarrow S_0^m+S_{1}^{m+1}.
\end{align*}
For $m=M$ the statement is true with the last two reactions replaced by $d_0\colon S_{N_M}^M\rightarrow S_0^M$. Furthermore, there is a path from $w_m$ to $r_{1,m+1}$ and to $e_{m-1}$, connecting reactions in different layers. There is a path from the reaction node $e_0\colon \xymatrix@C=12pt{S_{0}^1 \ar[r] & S_1^1}$ to $r_{1,1}$ and from $d_1$ to $e_0$. Therefore, a directed path between any two reactions of the phosphorelay exists, and (iii) holds.

All that is left is to prove (i). For that, we use Theorem 2 in \cite{Angeli:2007ig} that states that if (a) the network has a positive conservation law, and (b) there is a conservation law with non-negative coefficients on the species for each minimal siphon, then the network is persistent. Since each species of the phosphorelay is part of a conservation law with non-negative coefficients \eqref{Eq:ConservationLaws}, it follows that (a) holds. 

If we show that the sets $\Sigma_m=\{S^m_0,S^m_1,\ldots,S^m_{N_m},X^{m-1},X^m\}$ for $m=1,2,\ldots,M$ (removing the non-defined $X^0,X^M$ for $m=1,M$) are the minimal siphons, then the conservation law \eqref{Eq:ConservationLaws} ensures that (b) holds and the proof is completed. 
We construct a graph that gives an easy visual inspection of which  the minimal siphons of the phosphorelay are. If $r$ is a reaction that contains a species  $s_1$ in the reactant and a species $s_2$ in the product, then we draw an edge $s_1\xrightarrow{r}s_2$  with label $r$. The graph is

$$
\xymatrix@C=7.5pt{
& S_0^1 \ar[dl] & & S_0^{m-1} & & &  S_0^m \ar@<0.3ex>[dll]^{e_{m-1}}  &  & & S_0^{m+1}  \ar@<0.3ex>[dl]^{e_{m}}  & & S_0^M \\ 
S_1^1\ar@{<->}[dr]   && \dots & &  X^{m-1} \ar[dr]^{d_{m-1}} \ar@<0.3ex>[urr]^{e_{m-1}'}  \ar[ul]^{d_{m-1}} \ar@<0.3ex>[dl]^{e_{m-1}'}  &&  & & X^{m} \ar@<0.3ex>[ur]^{e_m'}    \ar[dr]^{d_{m}}  \ar[ull]_{d_{m}} \ar@<0.3ex>[dl]^{e_{m}'} & & \dots & & S_{N_M}^M \ar[ul] \\
& S_2^1& & S_{N_{m-1}}^{m-1} \ar@<0.3ex>[ur]^{e_{m-1}}  & & S_{1}^m \ar@{<->}[r] & \dots \ar@{<->}[r] & S_{N_m}^m \ar@<0.3ex>[ur]^{e_{m}}  & & S_{1}^{m+1} & & S_{N_{M}-1}^{M}\ar@{<->}[ur]
}
 $$ 
where $d_m$ is as above, $e_m\colon S_{N_m}^{m}+S_0^{m+1} \rightarrow X^m$, and $e_m'\colon X^m\rightarrow S_{N_m}^{m}+S_0^{m+1}$. The labels of the reactions $r_{n,m}\colon S_n^{m}\leftrightarrow S_{n+1}^m$ are not shown.

Let $\Sigma$ be a siphon. Then inspection of the graph gives:
\begin{enumerate}[(1)]
\item  If  $S_n^m$ belongs to $\Sigma$  for some $n>0$, then so do $S_{n'}^m$ for all $n'>0$. 
\item If  $S_n^m$ belongs to $\Sigma$ for some $n\geq 0$, then so do $X^m$ and $X^{m-1}$. Further, if $S_n^1$ or $S_0^M$ belong to $\Sigma$ then so do $S_0^1$ and $S_{N_M}^M$, respectively.
\item  If $X^m$ belongs to $\Sigma$, then either $S_0^{m+1}$ or $S_{N_m}^m$ (and thus $S_n^m$ for all $n>0$) belong to $\Sigma$.
\end{enumerate}
It is easy to see that the middle pentagon which contains the species in $\Sigma_m$ is a siphon for all $1\le m\le M$, and conditions (1)--(3) ensure that it is minimal. If  $\Sigma$  is a siphon that does not contain   $\Sigma_m$ for any $m$, then (1)--(3) imply that it must contain $X^m$ for all $m$.
If $S_0^m$ does not belong to $\Sigma$ for any $m$, then by (3) $S_n^m$ belongs to $\Sigma$ for all $m$ and $n>0$. It follows from  (2) that so does $S_0^1$ and $\Sigma_1\subseteq \Sigma$, thereby reaching a contradiction.
Thus, there is an $m$ for which $S_0^m$ belongs to $\Sigma$. Since $S_{N_m}^m$ does not ($\Sigma_m\nsubseteq \Sigma$), by (3) $S_0^{m+1}$ belongs to $\Sigma$. We repeat the argument to conclude that $S_0^M\in \Sigma$. It follows from  (2) that so does $S_{N_M}^M$ and $\Sigma_M\subseteq \Sigma$, again reaching a contradiction. Therefore, any siphon contains $\Sigma_m$ for some $m$ as desired.



\bibliographystyle{elsarticle-num}
\bibliography{RelayBib}







\end{document}